\documentclass[11pt]{article}%
\usepackage{amsmath}
\usepackage{amssymb}
\usepackage{amsfonts}

\usepackage{cite}
\usepackage{graphicx}
\usepackage{float}
\usepackage{longtable}
\usepackage[utf8]{inputenc}
\usepackage{hyperref}

\usepackage{algorithmicx}
\usepackage[ruled,vlined]{algorithm2e}
\SetAlgoCaptionSeparator{ }

\usepackage{fullpage}
\usepackage{authblk}

\newcommand{\fref}[1]{Fig.~\ref{#1}}

\newcommand{\tref}[1]{Table~\ref{#1}}
\newcommand{\sref}[1]{Section~\ref{#1}}

\newenvironment{algo}[1][!h]
  {
   \begin{algorithm}[#1]%
  }{\end{algorithm}}

\newenvironment{proced}[1][!h]
  {
   \begin{algorithm}[#1]%
  }{\end{algorithm}}

\providecommand{\U}[1]{\protect\rule{.1in}{.1in}}
\newtheorem{theorem}{Theorem}

\newtheorem{lemma}{Lemma}

\newtheorem{proposition}{Proposition}

\newenvironment{proof}[1][Proof]{\textbf{#1.} }{\ \rule{0.5em}{0.5em}}
\begin{document}

\title{Multilayer Optimization for the Quantum Internet}
\author[1,2,3]{Laszlo Gyongyosi\footnote{Email: \href{mailto:l.gyongyosi@soton.ac.uk}{l.gyongyosi@soton.ac.uk}}}
\author[2]{Sandor Imre}
\affil[1]{School of Electronics and Computer Science, University of Southampton, Southampton, SO17 1BJ, UK}
\affil[2]{Department of Networked Systems and Services, Budapest University of Technology and Economics, Budapest, H-1117 Hungary}
\affil[3]{MTA-BME Information Systems Research Group, Hungarian Academy of Sciences, Budapest, H-1051 Hungary}
\date{}

\maketitle

\begin{abstract}
We define a multilayer optimization method for the quantum Internet. Multilayer optimization integrates separate procedures for the optimization of the quantum layer and the classical layer of the quantum Internet. The multilayer optimization procedure defines advanced techniques for the optimization of the layers. The optimization of the quantum layer covers the minimization of total usage time of quantum memories in the quantum nodes, the maximization of the entanglement throughput over the entangled links, and the reduction of the number of entangled links between the arbitrary source and target quantum nodes. The objective of the optimization of the classical layer is the cost minimization of any auxiliary classical communications. The multilayer optimization framework provides a practically implementable tool for quantum network communications, or long-distance quantum communications.
\end{abstract}

\section{Introduction}
\label{sec1}
Quantum Internet is a communication network with quantum nodes and quantum links \cite{ref1, ref2, ref3, ref4, ref5, ref6, ref7, ref31, ref42,ref43,ref44,ref45,ref46} that allows to the parties to perform efficient quantum communications \cite{ref34,ref35,ref36}. The aim of quantum Internet \cite{ref5, ref31} and quantum repeater networks \cite{ref1,ref34,ref35,ref36,ref37,ref38,ref39,ref40} is to distribute quantum entanglement between distant nodes through a chain of intermediate quantum repeater nodes \cite{ref21, ref22, ref23, ref24, ref25, ref26, ref27, ref28, ref29, ref30}. In the quantum Internet, the quantum nodes share entangled connections that formulate entangled links \cite{ref1, ref2, ref3, ref4, ref5, ref6}. The quantum nodes store the quantum states in their local quantum memory for path selection and path recovery purposes \cite{ref1, ref2, ref3, ref21, ref22, ref23, ref24, ref25, ref26, ref27, ref28, ref29, ref30, ref31}. Since setting several attributes must be optimized in parallel in an arbitrary quantum network, the optimization problem formulates a multi-objective procedure. Formally, multi-objective optimization covers the minimization of quantum memory usage time (storage time), the maximization of entanglement throughput (number of transmitted entangled states per second of a particular fidelity) of entangled links \cite{ref1, ref2, ref3, ref4, ref5, ref6}, and the reduction of the number of entangled links between a source and a target quantum node \cite{ref21, ref22, ref23, ref24, ref25, ref26, ref27, ref28, ref29, ref30, ref31}. However, the problem does not end here since a quantum repeater network can be approached on the quantum transmission (quantum layer) level and on the auxiliary classical communication (classical layer) level that is required for the dynamic functioning of the quantum layer. Therefore, the problem is not just a multi-objective optimization problem in the quantum layer but also a multilayer optimization issue that covers the development of both the quantum and classical layers of a quantum repeater network.    

Here, we define a multilayer optimization method for quantum repeater networks. This covers both the quantum layer and the classical layer of a quantum repeater network. 

By utilizing the tools of quantum Shannon theory \cite{ref1, ref2, ref3, ref4, ref5, ref6, ref7}, the optimization of the quantum layer includes minimizing the usage of quantum memories in the nodes to reduce the storage time of entangled states, the maximization of entanglement throughput of the entangled links, and also these conditions have to be satisfied for the shortest path between a given source node and target quantum node (i.e., a multi-objective optimization of the quantum layer).

The aim of classical-layer optimization is to curtail the cost of auxiliary classical communications, which is required for such optimization. The cost of the classical communication covers all communication costs required to achieve the optimal quantum network state including the classical communication steps for overall quantum storage time minimization, entanglement throughput maximization, and the selection of a shortest path. 

The multilayer optimization employs advanced methods to solve the multi-objective optimization of the quantum layer. We define the structures of the quantum memory utilization graph and the entanglement throughput tree for the multi-objective optimization of the quantum layer of a quantum repeater network. The quantum memory utilization graph models the quantum memory usage for entanglement storage. The entanglement throughput tree shows the entanglement throughput of entangled links with respect to the number of transmittable entangled states at a particular fidelity. Using these advanced constructions, we also define a method for the optimal assignment of entangled states in the repeater nodes. The input of the quantum layer optimization procedure is the quantum memory utilization graph, while the output of the method is a set of entanglement throughput trees. The output identifies the optimal states of the quantum network with respect to the multi-objective optimization function. 

Classical-layer optimization focuses on the minimization of the total cost of classical communications by utilizing swarm intelligence\cite{ref13, ref14, ref15, ref16, ref17}. This also defines a multi-objective problem since the cost has to be reduced with respect to the classical communication cost required for the minimization of quantum memory usage, the classical cost of entanglement throughput maximization of entangled links, and for the selection of the shortest path in the quantum layer. Classical-layer optimization uses some fundamentals of bacteria foraging models \cite{ref11, ref12, ref13, ref17, ref18, ref19, ref20} and probabilistic multi-objective uncertainty characterization \cite{ref8, ref9,ref10,ref11, ref12, ref13}. 

The optimization framework requires no changes in the physical layer, so the framework is directly implementable by the current physical devices \cite{ref1, ref2, ref3, ref4,ref5, ref6, ref34,ref35,ref36} and quantum networking elements \cite{ref21, ref22, ref23, ref24, ref25, ref26, ref27, ref28, ref29, ref30, ref31}. The method is useful in quantum networking environments with diverse physical attributes (different quantum memory characteristics, quantum error correction, physical quantum nodes attributes, and transmission capabilities of noisy quantum links). 

The novel contributions of our paper are as follows:
\begin{itemize}
\item  \textit{We conceive a complex optimization framework for quantum networks. It integrates the development of the quantum and classical layers of quantum repeater networks.}

\item \textit{Quantum-layer optimization utilizes the attributes of physical-layer quantum transmissions, quantum memory usage, and entanglement distribution via the framework of quantum Shannon theory.}

\item \textit{Classical-layer optimization focuses on minimizing any auxiliary communications related to the quantum layer and optimization.}

\item \textit{Multilayer optimization is applicable by current physical devices and quantum networking elements providing a solution for the optimization of arbitrary quantum networking scenarios with diverse physical attributes and environments.}
\end{itemize}

This paper is organized as follows. In \sref{sec2}, the system model is proposed. In \sref{sec3}, the optimization procedure of the quantum layer of quantum repeater networks is defined. \sref{sec4} studies the optimization of the classical layer. \sref{sec5} provides a performance evaluation. Finally, \sref{sec6} concludes the paper. Supplemental material is included in the Appendix.  
 
\section{System Model}
\label{sec2}
Our model assumes that the quantum repeater network consists of a source and target node with intermediate repeater nodes and a quantum switcher node. A quantum switcher node $S$ operates as follows. Node $S$ is a quantum repeater node capable of switching between the entangled connections stored in its local quantum memory and a permit of applying entanglement swapping on the selected connections. While an $i$-th quantum repeater node establishes only the entangled connections with the neighbor quantum repeater nodes, a switcher node is equipped with an extended knowledge about the quantum network to select between the entangled links. A general repeater node is not allowed to perform any link selection, since it is assumed in the model that a quantum repeater node has only local knowledge about the network. A switcher node based on its network knowledge can also send entanglement swapping commands to the quantum repeater nodes to define new paths in the network.

Let $N$ be a quantum network, $N=\left(V,{\rm {\mathcal S}}\right)$, where $V$ is a set of nodes, ${\rm {\mathcal S}}$ is a set of entangled links. Without loss of generality, the level ${{\text{L}}_{l}}$ of an entangled link $E\left(x,y\right)$ is defined as follows. For an ${{\text{L}}_{l}}$-level entangled link, the hop distance between quantum nodes $x$ and $y$ is \cite{ref1,ref41}
\begin{equation} \label{eq1} 
d{\left(x,y\right)}_{\text{L}_l}=2^{l-1}, 
\end{equation} 
 with $d{\left(x,y\right)}_{\text{L}_l}-1$ intermediate nodes between the nodes $x$ and $y$. The probability that an $\text{L}_l$-level entangled link $E\left(x,y\right)$ exists between nodes $x,y$ is ${\Pr }_{\text{L}_l}\left(E\left(x,y\right)\right)$, which depends on the actual network.

The $S$ quantum switcher is modeled as a quantum node with the following attributes and permissions:
\begin{itemize}
\item \textit{knowledge about the physical attributes of distant quantum repeater nodes and the entangled connections of $N$ (e.g., entanglement fidelity, quantum memory status, link noise, etc),}
\item \textit{internal quantum memory for the storage of entangled states,}
\item \textit{quantum functionality:} 
\begin{itemize}
\item \textit{permission to set new entangled connections between its local quantum system and a selected quantum node of the quantum network,}
\item \textit{permission to switch between the stored entangled states to construct new paths,}
\end{itemize}
\item \textit{classical functionality:}
\begin{itemize}
\item \textit{permission to command distant quantum nodes of $N$ via classical links (to construct new entangled connections in the network, to perform entanglement swapping between the selected nodes, other).}
\end{itemize}
\end{itemize}

The network model is illustrated in \fref{fig1}. The example network in \fref{fig1}(a) consists of six quantum repeater nodes $R_{i} $, $i=1,\ldots ,6$, and a quantum switcher $S$ that switches between the entangled connections using its local quantum memory. The switcher also can perform entanglement swapping in the network. The switcher node $S$ has knowledge about the physical attributes (e.g., entanglement fidelity, quantum memory status, etc) of quantum repeater nodes to make a decision on a path. The knowledge about the repeater nodes can be transmitted over a classical link to the quantum switcher (classical links are not depicted). 
As it is depicted in \fref{fig1}(b), the switcher node has a permission to set new entangled connections via its local quantum state with a selected quantum node. The $S$ switcher node decided to set a new entangled connection between its local quantum system and repeater node $R_{2}$. A standard quantum repeater is not allowed to perform these operations (except with the direct neighbors in the entanglement distribution phase) without a dedicated command from the switcher.

 \begin{center}
\begin{figure*}[!h]
\begin{center}
\includegraphics[angle = 0,width=1\linewidth]{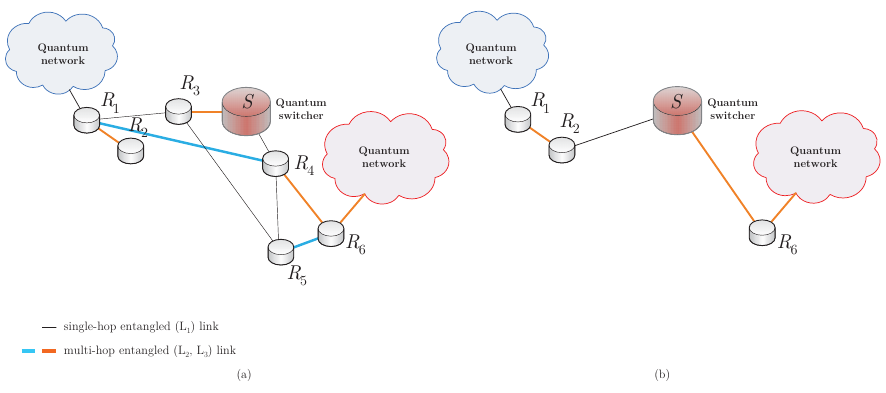}
\caption{The network model with a quantum switcher $S$ and quantum repeater nodes $R_{i} $, $i=1,\ldots ,6$. The ${{\text{L}}_{1}}$, ${{\text{L}}_{2}}$- and ${{\text{L}}_{3}}$-level entangled links of $N$ are depicted by gray, blue and orange, respectively (additional nodes are not shown). (a): For a current shortest path ${\rm {\mathcal P}}_{1} =\left\{R_{1},R_{3},S,R_{4},R_{6} \right\}$, the active repeater nodes selected by $S$ are $R_{3} $ and $R_{4} $. (b): Node $S$ switches the entangled connections in the local quantum memory from $R_{3} $ to $R_{2} $ and from $R_{4} $ to $R_{6} $. The switching operation defines a new shortest path ${\rm {\mathcal P}}_{2} =\left\{R_{1} ,R_{2} ,S,R_{6} \right\}$.} 
 \label{fig1}
 \end{center}
\end{figure*}
\end{center}

Note, path ${\rm {\mathcal P}}_{1} $ in \fref{fig1}(a) provides a shortest path at a particular network situation at an initial network time $T_{1} $.  Since the quantum network $N$ evolves in time (quality of the entangled links, the status of the nodes, internal quantum memories, etc), at a given time $T_{2} $, by utilizing the functions of the switcher node, the switcher node determined a new shortest path, ${\rm {\mathcal P}}_{2} $, as depicted in \fref{fig1}(b). 
 
\subsection{Quantum Memory Scheduling}

In this section, we define a structure for scheduling quantum memory usage of the quantum nodes called the quantum memory utilization graph ${\rm {\mathcal G}}_{m} $. This is a directed graph mapped from the network model, with several abstracted nodes and links.

\begin{proposition}
The ${\rm {\mathcal G}}_{m} $ quantum memory utilization graph is a directed graph with abstract nodes and links to schedule the quantum memory usage mapped from the quantum repeater network.  
\end{proposition}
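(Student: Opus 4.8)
The statement is essentially constructive, so the plan is to exhibit an explicit mapping from the quantum repeater network $N=\left(V,{\rm {\mathcal S}}\right)$ to a graph ${\rm {\mathcal G}}_{m}$ and then verify the three asserted properties: that ${\rm {\mathcal G}}_{m}$ is directed, that its vertices and edges are abstract scheduling objects rather than the physical nodes and links of $N$, and that the resulting structure faithfully encodes quantum memory usage. First I would fix the data carried by each physical node: for a repeater node (or the switcher $S$) holding entangled states in its local quantum memory, I record the set of stored entangled links $E\left(x,y\right)$ together with their level $\text{L}_l$ and the storage (usage) time interval during which each occupies a memory slot.

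Next I would define the abstract node set of ${\rm {\mathcal G}}_{m}$. Rather than taking $V$ itself, each abstract node is declared to be a memory-utilization unit, i.e.\ a pair consisting of a physical node $v\in V$ and an occupancy of one of its quantum memory slots by a particular entangled state over a particular time interval. This is the abstraction step: a single physical node that stores several entangled states simultaneously, or reuses a slot over disjoint intervals, spawns several abstract nodes. The directed links are then defined to connect two abstract nodes whenever the memory usage they represent must be ordered---for instance, when an entangled state stored in one slot is consumed (by entanglement swapping at $S$, or by release) before another is allocated, or when the output of a swapping operation becomes the input to the next. The orientation of each link is inherited from this temporal precedence, which is exactly what makes ${\rm {\mathcal G}}_{m}$ a directed graph rather than an undirected one.

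With the construction in place, verifying the asserted properties is largely bookkeeping: directedness holds because every link is oriented by precedence; the nodes and links are abstract because they are derived scheduling objects (slot-occupancies and precedence relations) and not the elements of $V$ and ${\rm {\mathcal S}}$; and the mapping is well defined because each slot-occupancy and each precedence relation is determined by the current network state of $N$.

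The step I expect to be the main obstacle is pinning down the abstract link relation precisely enough that ${\rm {\mathcal G}}_{m}$ genuinely \emph{schedules} memory, that is, so that a consistent ordering (a topological order, assuming acyclicity) of the directed graph corresponds to a feasible assignment of entangled states to memory slots over time. I would therefore argue that the precedence orientation cannot create a directed cycle among slot-occupancies competing for the same physical resource, since temporal order is itself acyclic; establishing this no-cycle condition, and confirming that it survives the switching operations performed by $S$, is the crux of the argument.
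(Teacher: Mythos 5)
Your construction is a legitimate way to substantiate the (informal) claim, but it is a genuinely different one from the paper's. The paper does not abstract nodes as time-stamped memory-slot occupancies; instead it splits each of the $n$ physical nodes into an abstracted \emph{transmitter} node and an abstracted \emph{receiver} node (so $2n$ abstract vertices, or $2(n-1)$ if the switcher is counted among the $n$), labels node pairs as $\left(\left. x\right|y\right)$ and $\left(\left. y\right|z\right)$, and draws a directed edge whenever one node serves as transmitter for another under a particular switcher mode $S_{i}$; the full graph ${\rm {\mathcal G}}_{m}$ is the combination of these relations over all $N_{S}$ switcher modes. Directedness comes from the transmitter-to-receiver orientation, and the ``abstraction'' is the transmitter/receiver splitting indexed by switcher mode, not a temporal precedence relation. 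The point you identify as the crux---that the precedence orientation must be acyclic so that a topological order yields a feasible time assignment---is in the paper not a property of ${\rm {\mathcal G}}_{m}$ at all: time-slot scheduling and resource conflicts are deferred to the conflict graph ${\rm {\mathcal C}}_{{\rm {\mathcal G}}_{et}}$ of the derived entanglement throughput tree and resolved there by weighted graph coloring (Lemma 2, Procedure 2). Your version buys a self-contained scheduling semantics inside ${\rm {\mathcal G}}_{m}$ itself, which is arguably cleaner in isolation, but it would not plug into Procedure 1, which consumes the paper's coarser transmitter/receiver graph as the substrate for probabilistic tree construction; the paper's version buys exactly that compatibility at the price of pushing all genuine scheduling work downstream.
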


The ${\rm {\mathcal G}}_{m} $ graph of quantum memory utilization is constructed as follows. Assuming $n$ quantum nodes (excluding the $S$ switcher node) in the network, the graph contains $n$ abstracted transmitter nodes and $n$ abstracted receiver nodes with directed connections. Note if the quantum switcher $S$ is modeled as the $n$-th node and $n_{S}=1$, where $n_{S}$ is the number of quantum switchers in $N$, the ${\rm {\mathcal G}}_{m} $ graph also can be constructed via $\left(n-1\right)$ transmitter and $\left(n-1\right)$ receiver quantum nodes. 
A given $N$ with an arbitrary number of quantum switcher nodes defines a particular ${\rm {\mathcal G}}_{m} $, therefore the ${\rm {\mathcal G}}_{m} $ graph is a combination of all other possible switcher modes. 
The ${\rm {\mathcal G}}_{m} $ graph contains directed edges between quantum node pairs $\left(\left. V_{x} \right|V_{y} \right)$ and $\left(\left. V_{y} \right|V_{z} \right)$ of a particular switcher mode $S_{i} $, $i=1,\ldots ,N_{S} $, where $N_{S} $ is the total number of switcher modes, depicted via nodes labeled as $\left(\left. x\right|y\right)$ and $\left(\left. y\right|z\right)$.

Let us assume that $N$ has a single switcher node $S$, and it has two states, $S_{1} $ and $S_{2} $. In both modes $S_{1} $ and $S_{2} $, repeater node $R_{1} $ serves as a transmitter node for node $R_{2} $. 

In switcher mode $S_{1} $, the shared entangled connection defines the following relations. Repeater node $R_{1} $ serves as a transmitter node for nodes $R_{3} $ and $R_{4} $. Node $R_{3} $ serves as a transmitter node for nodes $R_{4} $ and $R_{5} $. Node $R_{4} $ serves as a transmitter node for nodes $R_{5} $ and $R_{6} $. Node $R_{5} $ serves as a transmitter node for node $R_{6} $.

In switcher mode $S_{2} $, the shared entangled connection defines the following relation. Repeater node $R_{2} $ serves as a transmitter node for node $R_{6} $.

The ${\rm {\mathcal G}}_{m} $ graph of quantum memory utilization derived from the quantum network setting of \fref{fig1}. is illustrated in \fref{fig2}. 

 \begin{center}
\begin{figure*}[!h]
\begin{center}
\includegraphics[angle = 0,width=1\linewidth]{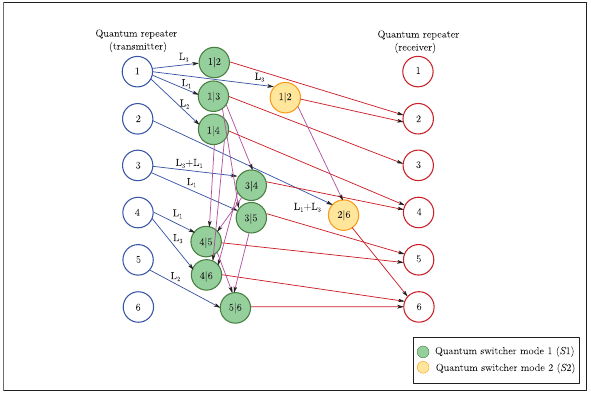}
\caption{The ${\rm {\mathcal G}}_{m} $ graph of quantum memory utilization derived from the network setting in Fig. 1. The graph contains $n$ abstracted transmitter nodes and $n$ abstracted receiver nodes with directed entangled connections. An $i$-level entangled connect is depicted by ${\rm L}_{i} $. The $S$ quantum switcher has two states, $S_{1} $ and $S_{2} $. The green circles represent quantum nodes operating on switcher mode $S_{1} $. The yellow circles represent nodes operating on switcher mode $S_{2} $.} 
 \label{fig2}
 \end{center}
\end{figure*}
\end{center}

\subsection{Entanglement Throughput Tree}
In this section, we define the structure of the entanglement throughput tree, which aims to extract information from the quantum memory utilization graph. The entanglement throughput tree is also the output format of the multi-objective optimization procedure of the quantum layer. 
\begin{lemma}
The ${\rm {\mathcal G}}_{et} $ entanglement throughput tree is a structure modeling the multi-objective optimization problem of quantum repeater networks. The tree structure is derived from the ${\rm {\mathcal G}}_{m} $ quantum memory utilization graph.
\end{lemma}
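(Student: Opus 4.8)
The plan is to establish the lemma by an explicit construction, since the assertion is really that a tree ${\rm {\mathcal G}}_{et}$ carrying all three quantum-layer objectives can be read off from the directed graph ${\rm {\mathcal G}}_{m}$. First I would fix a source node, a target node, and a switcher mode $S_{i}$, $i=1,\ldots,N_{S}$; this selection singles out, inside ${\rm {\mathcal G}}_{m}$, the sub-collection of abstract transmitter/receiver pairs $\left(\left. x\right|y\right)$ that are active in that mode. Restricting ${\rm {\mathcal G}}_{m}$ to these active directed edges produces a rooted directed acyclic graph: within a fixed switcher mode every entangled connection is oriented from its transmitter toward its receiver, consistent with a single topological ordering of the repeater nodes running from source to target, and this common orientation precludes directed cycles.

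The second step is to collapse this rooted DAG into a tree. Whenever a receiver node is reachable by more than one active directed edge, I retain a single incoming edge and record the remaining parallel connections as alternative branches outside the distinguished skeleton. This pruning yields a genuine tree ${\rm {\mathcal G}}_{et}$ with the source as its root and the target (together with any other terminal repeater nodes) as leaves, where the depth of a leaf equals the number of entangled links on the corresponding source-to-target route; a higher-level link $\text{L}_{l}$ of hop distance $d\left(x,y\right)_{\text{L}_{l}}=2^{l-1}$ contributes a single edge and thereby shortens that depth.

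The third step loads the objectives onto the tree. To each directed edge of ${\rm {\mathcal G}}_{et}$ I attach a label triple consisting of the entanglement throughput $B_{F}\left(E\left(x,y\right)\right)$ of the underlying link (entangled states of fidelity $F$ per second), the quantum-memory storage time the transmitter must hold its half of the pair until swapping, and the link level $\text{L}_{l}$. With these labels the three goals become functions on ${\rm {\mathcal G}}_{et}$: maximizing entanglement throughput is maximizing the bottleneck throughput label along a root-to-leaf branch, minimizing total memory usage is minimizing the accumulated storage-time labels, and reducing the number of entangled links is minimizing leaf depth. Hence ${\rm {\mathcal G}}_{et}$ models the full multi-objective problem, which is the modeling claim, and it has been derived from ${\rm {\mathcal G}}_{m}$, which is the derivation claim.

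The main obstacle I expect is the pruning step, because a directed graph admits many spanning trees and a skeleton chosen to optimize one objective could discard the edges needed to evaluate another. I would resolve this by keeping the complete label triple on every active edge before any pruning, and by committing to a unique incoming edge per node only after the three objective functions have been declared on the labeled structure. In this way ${\rm {\mathcal G}}_{et}$ records the optimization problem rather than prematurely solving it, and the same construction applies uniformly across all switcher modes $S_{i}$, so the combined tree faithfully represents the quantum-layer optimization over the whole of ${\rm {\mathcal G}}_{m}$.
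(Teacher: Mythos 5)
Your construction is a legitimate way to establish the literal claim of the lemma, but it takes a genuinely different route from the paper. The paper's proof is itself a construction, but a randomized one: it defines a cost function $\Omega \left(I,J\right)=1/\left(C\left(E_{{\rm L}_{l} } \left(I,J\right)\right)+\zeta _{J} \right)$ combining the reciprocal link throughput with the storage cost $\zeta_J$, an adaptively updated entanglement utility $\lambda '_{E_{{\rm L}_{l} } \left(I,J\right)}$, and an ant-colony-style transition probability $\Pr \left(I,J\right)$ built from powers of these two quantities; the tree is then grown stochastically from the set of initial nodes (Procedure 1), with utilities relaxed by an evolution parameter $\varphi$, duplicate connections merged, and unused links removed. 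Your approach instead restricts ${\rm {\mathcal G}}_{m}$ to a fixed switcher mode, argues the result is a rooted DAG, prunes it to a tree, and attaches a label triple to each edge so the three objectives become functions on the tree. What your version buys is a cleaner, deterministic structural picture in which the three objectives are visibly encoded. What it loses is precisely what the paper needs downstream: the randomized generator ${\rm {\mathcal M}}$ is invoked repeatedly in Algorithm 1 (Theorem 1) to sample candidate trees whose selection probabilities are steered by the evolving utilities, so the probabilistic construction is not incidental to the lemma but the mechanism that makes the later multi-objective search work; a single deterministic tree per mode cannot play that role. Two smaller caveats: your acyclicity claim for the mode-restricted graph is asserted from "a single topological ordering" rather than derived from the transmitter/receiver structure of ${\rm {\mathcal G}}_{m}$ (a node may be both a transmitter and a receiver, so this deserves an argument), and your pruning rule of keeping one incoming edge per node is a design choice the paper avoids by letting the random growth plus the merge/removal steps determine which connections survive.
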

\begin{proof}
As a given source node is selected, the next nodes are added to the path with a given probability. 

Let us assume that ${\rm {\mathcal G}}_{m} $ is determined. Let us index the nodes by an $ID$ identifier tag, $ID=\left\{A,B,\ldots \right\}$, and let ${\rm S}_{I} $ be the set of unvisited neighbor nodes of a node $I$. 

Let $B_{F} \left(E_{{\rm L}_{l} } \left(I,J\right)\right)$ refer to the entanglement throughput of a given ${\rm L}_{l} $-level entangled link $E_{{\rm L}_{l} } \left(I,J\right)$ between nodes $\left(I,J\right)$ measured in the number of $d$-dimensional entangled states per second at a particular fidelity $F$ \cite{ref1, ref3, ref4}.  

Further, let $J$ be a neighbor node of $I$ with entangled connection $E_{{\rm L}_{l} } \left(I,J\right)$ and with entanglement throughput $B_{F} \left(E_{{\rm L}_{l} } \left(I,J\right)\right)$. 

A cost function, $\Omega \left(I,J\right)$, between nodes $\left(I,J\right)$ is defined as
\begin{equation} \label{ZEqnNum566366} 
\Omega \left(I,J\right)=\frac{1}{C\left(E_{{\rm L}_{l} } \left(I,J\right)\right)+\zeta _{J} } ,                                            
\end{equation} 
where $C\left(E_{{\rm L}_{l} } \left(I,J\right)\right)$ is the cost of entangled link $E_{{\rm L}_{l} } \left(I,J\right)$ defined as
\begin{equation} \label{ZEqnNum253059} 
C\left(E_{{\rm L}_{l} } \left(I,J\right)\right)=\frac{1}{B_{F} \left(E_{{\rm L}_{l} } \left(I,J\right)\right)} ,                                            
\end{equation} 
while $\zeta _{J} $ is the cost of quantum storage in node $J$. 

Let $\lambda _{E_{{\rm L}_{l} } \left(I,J\right)} $ be the entanglement utility coefficient of entangled link $E_{{\rm L}_{l} } \left(I,J\right)$ between nodes $I$ and $J$, initialized as $\lambda _{E_{{\rm L}_{l} } \left(I,J\right)} \ge 0$. This amount is equivalent to the utility of the entangled link $E_{{\rm L}_{l} } \left(I,J\right)$ that it has taken to arrive at the current node $J$ from $I$.

At a given $B_{F} \left(E_{{\rm L}_{l} } \left(I,J\right)\right)$, the initial $\lambda _{E_{{\rm L}_{l} } \left(I,J\right)} $ entanglement utility \cite{ref4} of link $E_{{\rm L}_{l} } \left(I,J\right)$ is updated to $\lambda '_{E_{{\rm L}_{l} } \left(I,J\right)} $ as
\begin{equation} \label{ZEqnNum640853} 
\begin{split}
  {{{{\lambda }'}}_{{{E}_{{{\text{L}}_{l}}}}\left( I,J \right)}}&={{\left( \frac{1}{{{\lambda }_{{{E}_{{{\text{L}}_{l}}}}\left( I,J \right)}}}+{{B}_{F}}\left( {{E}_{{{\text{L}}_{l}}}}\left( I,J \right) \right) \right)}^{-1}} \\ 
 & =\frac{{{\lambda }_{{{E}_{{{\text{L}}_{l}}}}\left( I,J \right)}}}{1+{{B}_{F}}\left( {{E}_{{{\text{L}}_{l}}}}\left( I,J \right) \right){{\lambda }_{{{E}_{{{\text{L}}_{l}}}}\left( I,J \right)}}}.  
\end{split}
\end{equation} 
Using these cost functions, the $\Pr \left(I,J\right)$ probability that from node $I$ a node $J$ is selected is as follows:
\begin{equation} \label{ZEqnNum614466} 
\Pr \left(I,J\right)=\left\{\begin{array}{l} {\frac{\left(\lambda '_{E_{{\rm L}_{l} } \left(I,J\right)} \right)^{{{\omega }^{*}}} \left(\Omega \left(I,J\right)\right)^{\delta } }{\sum _{\forall X\in {\rm S}_{I} }\left(\lambda '_{E_{{\rm L}_{l} } \left(I,X\right)} \right)^{{{\omega }^{*}}} \left(\Omega \left(I,X\right)\right)^{\delta }  } ,{\rm \; if\; }J\in {\rm S}_{I} } \\ {0,{\rm otherwise}} \end{array}\right. , 
\end{equation} 
where ${{\omega }^{*}}$ and $\delta $ are weighting coefficients \cite{ref4}. 

Using \eqref{ZEqnNum566366}, \eqref{ZEqnNum253059}, and \eqref{ZEqnNum614466} for all node pairs, the ${\rm {\mathcal M}}$ method to build a random ${\rm {\mathcal G}}_{et} $ entanglement throughput tree using a ${\rm {\mathcal G}}_{m} $ graph is as follows \cite{ref8, ref9}. 

Let ${\rm S}'$ refer to the set of already reached destination nodes, and let ${\rm {\mathcal I}}$ be the set of initial nodes, ${\rm {\mathcal F}}_{I} $ the set of feasible neighboring nodes to node $I$. Let ${\rm {\mathcal D}}$ be the set of destination nodes. 

The method is given in Procedure 1.  

\setcounter{algocf}{0}
\begin{proced}
  \DontPrintSemicolon
\caption{\textit{Random Entanglement Throughput Tree Construction}}
\textbf{Step 1}. Initialize ${\rm {\mathcal G}}_{et} =\emptyset $, ${\rm S}'=\emptyset $. Select a node $I$ of set ${\rm {\mathcal I}}$ and determine ${\rm {\mathcal F}}_{I} $. 

\textbf{Step 2}. If ${\rm {\mathcal F}}_{I} =\emptyset $, then remove node $I$ from ${\rm {\mathcal I}}$, as ${\rm {\mathcal I}}\, ={\rm {\mathcal I}}-I$, otherwise compute probability $\Pr \left(I,J\right)$ via \eqref{ZEqnNum614466} for each node $J$ of ${\rm {\mathcal F}}_{I} $. 

\textbf{Step 3}. Define uniformly distributed variable $x\in \left(0,1\right]$. If $x>0.5$, select that node $J$ from ${\rm {\mathcal F}}_{I} $ that has large $\Pr \left(I,J\right)$. If $x\le 0.5$, choose $J$ randomly. 

\textbf{Step 4}. Update sets ${\rm {\mathcal G}}_{et} $ and ${\rm {\mathcal I}}$ as ${\rm {\mathcal G}}_{et} ={\rm {\mathcal G}}_{et} \bigcup \left(I,J\right)$, ${\rm {\mathcal I}}={\rm {\mathcal I}}\; {\rm =}\bigcup \; J$. If node $J$ is a destination node, $J\in {\rm {\mathcal D}}$, then update set ${\rm S}'$ as ${\rm S}'{\rm }={\rm S}'\bigcup \; J$. 

\textbf{Step 5}. Update $\lambda '_{E_{{\rm L}_{l} } \left(I,J\right)} $ as $\lambda '_{E_{{\rm L}_{l} } \left(I,J\right)} =\left(1-\varphi \right)\lambda '_{E_{{\rm L}_{l} } \left(I,J\right)} +\varphi \lambda _{E_{{\rm L}_{l} } \left(I,J\right)} $, where $\varphi $ is an evolution parameter, $\varphi \in \left[0,1\right]$, and $\lambda _{E_{{\rm L}_{l} } \left(I,J\right)} $ is the initial value of entanglement utility. Merge nodes of ${\rm {\mathcal G}}_{et} $ that duplicate entangled connections, and check the reachability of the nodes of ${\rm {\mathcal D}}$.

\textbf{Step 6}. Repeat steps 2--5 until ${\rm {\mathcal I}}\ne \emptyset $ or ${\rm S}'\ne \emptyset $. Remove unused entangled links from ${\rm {\mathcal G}}_{et} $ and output ${\rm {\mathcal G}}_{et} $.
\end{proced}
The proof is concluded here.
\end{proof}

The structure of a ${\rm {\mathcal G}}_{et} $ entanglement throughput tree is illustrated in \fref{fig3}. 

 \begin{center}
\begin{figure*}[!h]
\begin{center}
\includegraphics[angle = 0,width=1\linewidth]{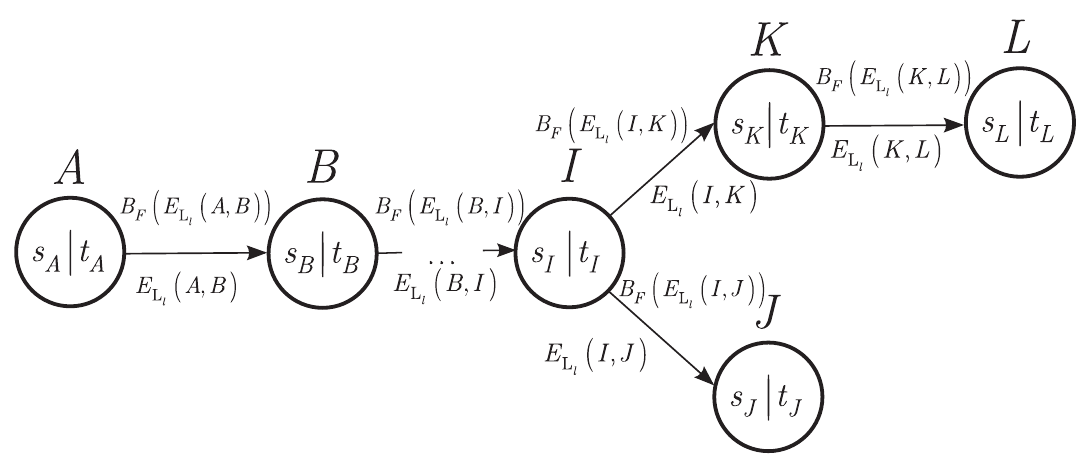}
\caption{The structure of a ${\rm {\mathcal G}}_{et} $ entanglement throughput tree. A quantum node has an $ID$ identifier tag, $ID=\left\{A,B,\ldots ,L\right\}$, and all incoming and outcoming entangled links are identified by the $s_{ID} $ source neighbor node and the $t_{ID} $ target neighbor node. Node $A$ represents a source, while the destination nodes are $J,K,L$. The $B_{F} \left(E_{{\rm L}_{l} } \left(x,y\right)\right)$ entanglement throughput of all $\left(x,y\right)$ node pairs are depicted above the directed lines; the link identifier $E_{{\rm L}_{l} } \left(x,y\right)$ is depicted under the links.} 
 \label{fig3}
 \end{center}
\end{figure*}
\end{center}
 
\subsection{Entanglement Assignment Cycle}

In this section, we propose a solution for an optimal assignment (scheduling) of stored entanglement called the entanglement assignment cycle, $\alpha _{{\rm {\mathcal G}}_{et} } $. The goal of $\alpha _{{\rm {\mathcal G}}_{et} } $ is to achieve a minimal overall storage time $t_{s}^{*} \left({\rm {\mathcal G}}_{et} \right)$ at a given ${\rm {\mathcal G}}_{et} $ entanglement throughput tree.

\begin{lemma}
An entanglement assignment cycle can be determined by a weighted graph coloring method.
\end{lemma}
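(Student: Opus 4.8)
The plan is to realize the assignment cycle $\alpha _{{\rm {\mathcal G}}_{et} } $ as a proper weighted coloring of an auxiliary conflict graph built from ${\rm {\mathcal G}}_{et} $. First I would construct a conflict graph ${\rm {\mathcal H}}$ whose vertices are the entangled links $E_{{\rm L}_{l} } \left(I,J\right)$ appearing in ${\rm {\mathcal G}}_{et} $ (equivalently, the stored entangled states that must reside in a quantum memory), and whose edges join any two links that cannot share the same storage slot, for instance two links incident to a common node that would have to occupy that node's memory simultaneously, or two links competing for the same swapping resource during one phase. To each vertex I would attach a weight derived from the storage cost $\zeta _{J} $ and the link cost $C\left(E_{{\rm L}_{l} } \left(I,J\right)\right)=1/B_{F} \left(E_{{\rm L}_{l} } \left(I,J\right)\right)$, so that the weight reflects the contribution of that link to the overall storage time.

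Next I would interpret a proper coloring of ${\rm {\mathcal H}}$ as a storage schedule: each color class is a set of mutually non-conflicting links that may be held in memory during one phase of the cycle, and the ordered sequence of color classes defines the periodic assignment cycle $\alpha _{{\rm {\mathcal G}}_{et} } $. Under this correspondence the total usage time of the quantum memories is governed by the weighted length of the coloring, so that minimizing the weighted chromatic objective of ${\rm {\mathcal H}}$ yields the minimal overall storage time $t_{s}^{*} \left({\rm {\mathcal G}}_{et} \right)$. I would then verify the equivalence in both directions: every proper weighted coloring produces a conflict-free assignment (soundness), and every conflict-free assignment induces a proper coloring (completeness), so the two optimization problems share the same optimum and the coloring genuinely \emph{determines} the cycle rather than merely approximating it.

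The main obstacle is twofold. First, I must pin down the conflict relation precisely from the tree structure so that ${\rm {\mathcal H}}$ faithfully encodes which entangled states truly compete for the same memory at the same time; a too-coarse relation over-constrains the schedule and inflates $t_{s}^{*} \left({\rm {\mathcal G}}_{et} \right)$, while a too-loose one admits physically infeasible simultaneous storage. Second, weighted graph coloring is NP-hard on arbitrary graphs, so establishing that the coloring determines the optimal cycle requires exploiting the structure inherited from ${\rm {\mathcal G}}_{et} $. Since ${\rm {\mathcal G}}_{et} $ is a tree, I expect the conflict graph to fall into a tractable class, for example an interval or chordal graph arising from the nested storage intervals along root-to-leaf paths, for which weighted coloring is polynomially solvable and the chromatic weight coincides with the clique-based lower bound. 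Showing that ${\rm {\mathcal H}}$ belongs to such a class, and that its optimal weighted coloring equals $t_{s}^{*} \left({\rm {\mathcal G}}_{et} \right)$, is the crux; the remaining steps are routine translations between the scheduling and coloring formulations.
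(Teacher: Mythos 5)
Your proposal follows the same skeleton as the paper's proof---build a conflict graph on the entangled links of ${\rm {\mathcal G}}_{et} $, read each color class as a time slot, and recover the cycle $\alpha _{{\rm {\mathcal G}}_{et} } $ from the coloring---but it diverges on the two points that actually carry the word ``weighted.'' In the paper the conflict relation is pinned down narrowly (two stored entangled connections conflict iff they are associated to the same physical link, encoded by the constraint $\tau _{n,t} +\tau _{n',t} \le 1$), and the weight of a vertex is not a cost drawn from $\zeta _{J} $ or $C\left(E_{{\rm L}_{l} } \left(I,J\right)\right)$ but an integer \emph{demand} $w\left(n\right)=\left\lceil F_{\max } /F_{i} \right\rceil $ derived from fidelity: a lower-fidelity connection needs more time units, hence must receive $w\left(n\right)$ distinct colors. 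This makes the problem a vertex multicoloring, which the paper solves by the standard splitting construction (replace $n$ by a clique on $w\left(n\right)$ copies, join copies of conflicting vertices, run an unweighted coloring), followed by a separate linear program for the inter-slot intervals $\Delta \left({\rm {\mathcal W}}\left({\rm {\mathcal C}}_{{\rm {\mathcal G}}_{et} } \right)\right)$. Your version, in which weights enter the objective rather than the demand, is a defensible alternative formalization, but it changes what ``weighted coloring'' means and leaves the fidelity-dependence of storage time unmodeled. Your final paragraph also overreaches relative to what the lemma asks: the paper never claims polynomial-time optimality, only that the cycle \emph{can be determined} by applying a distributed weighted coloring algorithm, so the NP-hardness concern and the conjecture that the conflict graph inherits a chordal or interval structure from the tree are not needed---and the latter is unsubstantiated, since conflicts are defined by shared physical links rather than by nesting along root-to-leaf paths, so there is no evident reason the conflict graph of a tree should be perfect. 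As a proof of the stated lemma, your plan would go through once you fix the conflict relation and adopt demand-type weights; the chordality claim should either be proved or dropped.
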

\begin{proof} 
To determine the minimal overall storage time for a ${\rm {\mathcal G}}_{et} $ entanglement throughput tree, the ${\rm {\mathcal C}}_{{\rm {\mathcal G}}_{et} } $ conflict graph of that ${\rm {\mathcal G}}_{et} $ is constructed first. In the ${\rm {\mathcal C}}_{{\rm {\mathcal G}}_{et} } $ graph, each vertex corresponds to a directed link of ${\rm {\mathcal G}}_{et} $ (an entangled connection). An edge exists between two vertices of ${\rm {\mathcal C}}_{{\rm {\mathcal G}}_{et} } $, if only the vertices (entangled connections) have a conflict. A conflict occurs if two (stored) entangled connects are associated to the same physical link. The problem is therefore to associate each link of ${\rm {\mathcal G}}_{et} $ a $\alpha _{{\rm {\mathcal G}}_{et} } $ storage schedule (optimal assignment of stored entanglement), which includes the list of time slots when a given link can transmit the stored entangled states such that $\alpha _{{\rm {\mathcal G}}_{et} } $ total number of time units is minimized. Therefore, our goal is to determine what entangled connects of the given ${\rm {\mathcal G}}_{et} $ should be scheduled in which time unit, such that the total storage time is minimal in $\alpha _{{\rm {\mathcal G}}_{et} } $. 

Let $\tau _{n,t} \in \left\{0,1\right\}$ be an indicator variable, defined as
\begin{equation} \label{5)} 
{\tau }_{n,t}=\left\{ \begin{array}{l}
1,\text{ if }n\text{ is associated at } t \\ 
0,\text{ otherwise.} \end{array}
\right. 
\end{equation} 
For a periodic scheduling, 
\begin{equation} \label{6)} 
\tau _{n,t} =\tau _{n,t+iT} ,                                                     
\end{equation} 
where $T$ is a period and $i$ is a constant. 

For an entangled connection $n$, let us define $\wedge \left(n\right)$ as the set of entangled connects $n'$ that are scheduled in the same time unit $t$, but the physical link can transmit only $n$ or $n'$. As follows, for any $n'\in \wedge \left(n\right)$, 
\begin{equation} \label{7)} 
\tau _{n,t} +\tau _{n',t} \le 1.                                                   
\end{equation} 
As it can be concluded, this problem is analogous to the coloring of the conflict graph ${\rm {\mathcal C}}_{{\rm {\mathcal G}}_{et} } $. 

Then, let us assume that each entangled connection (entangled link) has a weight $w\left(n\right)$, which is defined as 
\begin{equation} \label{8)} 
w\left(n\right)=\left\{ \begin{array}{l}
1,\text{ if }F_i=F_{max} \\ 
\left\lceil \frac{F_{max}}{F_i}\right\rceil,\text{ if }F_i<F_{max} \end{array}
\right., 
\end{equation} 
where $F_{i} $ is the fidelity of entangled connection $i$ and $F_{\max } $ is the largest fidelity available. As follows, for a lower-fidelity entangled connection, the transmission of a given amount of information requires more time units.  

As the weights are determined, the problem is analogous to a weighted graph coloring of the conflict graph ${\rm {\mathcal C}}_{{\rm {\mathcal G}}_{et} } $, which is the assignment of at least $w\left(n\right)$ distinct colors to each entangled connection $n$ such that no two entangled connections sharing the same color interfere with each other on the physical link. For this purpose, a simple distributed weighted coloring algorithm \cite{ref10} can be straightforwardly applied. 

The method for the ${\rm {\mathcal W}}\left({\rm {\mathcal C}}_{{\rm {\mathcal G}}_{et} } \right)$ weighted coloring of conflict graph ${\rm {\mathcal C}}_{{\rm {\mathcal G}}_{et} } $ is summarized in Procedure 2. 

\begin{proced}
  \DontPrintSemicolon
\caption{\textit{Weighted Coloring of a Conflict Graph}}

\textbf{Step 1}. Determine conflict graph ${\rm {\mathcal C}}_{{\rm {\mathcal G}}_{et} } $ of ${\rm {\mathcal G}}_{et} $, and compute weights $w\left(n\right)$ for all $n$. Assign weight $w\left(n\right)$ to vertex $n\in {\rm {\mathcal C}}_{{\rm {\mathcal G}}_{et} } $.    

\textbf{Step 2}. Construct a new conflict graph ${\rm {\mathcal C}}'_{{\rm {\mathcal G}}_{et} } $ from ${\rm {\mathcal C}}_{{\rm {\mathcal G}}_{et} } $: for each vertex $n$ with weight $w\left(n\right)$, create $w\left(n\right)$ vertices, $\left\{n_{1} ,\ldots ,n_{w\left(n\right)} \right\}$, and add to ${\rm {\mathcal C}}'_{{\rm {\mathcal G}}_{et} } $. 

\textbf{Step 3}. Add to ${\rm {\mathcal C}}'_{{\rm {\mathcal G}}_{et} } $ the edges connecting $n_{a} $ and $n_{b} $, where $1\le a<b\le w\left(n\right)$.

\textbf{Step 4}. Add to ${\rm {\mathcal C}}'_{{\rm {\mathcal G}}_{et} } $ the edge between  $i_{a} $ and $j_{b} $ only if there is an edge between entangled connects $i$ and $j$ in ${\rm {\mathcal C}}_{{\rm {\mathcal G}}_{et} } $. Apply the unweighted vertex coloring algorithm \cite{ref10} on ${\rm {\mathcal C}}'_{{\rm {\mathcal G}}_{et} } $.

\textbf{Step 5}. Assign $n$ all the colors that are used by $n_{k} $, for $1\le k\le w\left(n\right)$ in graph ${\rm {\mathcal C}}'_{{\rm {\mathcal G}}_{et} } $. 

\textbf{Step 6}. Output the ${\rm {\mathcal W}}\left({\rm {\mathcal C}}_{{\rm {\mathcal G}}_{et} } \right)$ weighted coloring of conflict graph ${\rm {\mathcal C}}_{{\rm {\mathcal G}}_{et} } $. 

\end{proced}

After an entanglement assignment cycle has been determined by the weighted graph coloring method, in the next step the $\Delta \left({\rm {\mathcal W}}\left({\rm {\mathcal C}}_{{\rm {\mathcal G}}_{et} } \right)\right)$ time intervals between each time unit of a given cycle is computed. For this purpose, a linear programming method \cite{ref8} can be applied.
\end{proof} 

\section{Quantum-Layer Optimization}
\label{sec3}
In this section, we define the multi-objective optimization for the quantum layer. The multi-objective function covers a parallel optimization of quantum memory usage and entanglement throughput for a shortest path. 

\begin{theorem}
At a given ${\rm {\mathcal G}}_{m} $, the quantum layer of a quantum repeater network $N$ can be optimized by a procedure ${\rm {\mathcal P}}_{Q} $ that achieves a parallel minimization of the quantum memory usage time $t_{s} $, the maximization of entanglement throughput $B_{F} $, and the minimization of the number $\left|{\rm {\mathcal P}}\right|$ of entangled links between two arbitrary nodes.
\end{theorem}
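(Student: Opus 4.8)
The plan is to assemble the procedure $\mathcal{P}_Q$ out of the two constructions already in hand, so that each objective is delegated to machinery whose correctness is established, and then to argue that their composition is jointly optimal in the Pareto sense. The key observation is that the selection probability $\Pr(I,J)$ of Procedure 1 already folds two of the three objectives into a single scalar: the cost $\Omega(I,J)=1/(C(E_{\mathrm{L}_l}(I,J))+\zeta_J)$ rewards high entanglement throughput $B_F$ (since $C=1/B_F$) and penalizes the storage cost $\zeta_J$, while the reinforced utility $\lambda'_{E_{\mathrm{L}_l}(I,J)}$ biases the walk toward high-throughput links; simultaneously, the tree structure of $\mathcal{G}_{et}$, grown outward from a source to the destination set $\mathcal{D}$, keeps $|\mathcal{P}|$ small because each destination is reached through a single branch. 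Thus a single run of Procedure 1 already yields a candidate that trades off $B_F$, $\zeta_J$, and $|\mathcal{P}|$.

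First I would make the stochastic character of Procedure 1 explicit by iterating it to generate a population of candidate trees $\{\mathcal{G}_{et}^{(k)}\}$, exploiting the uniform branching of Step 3 to explore the search space. Second, for each candidate I would invoke Lemma 2: build the conflict graph $\mathcal{C}_{\mathcal{G}_{et}^{(k)}}$, run the weighted coloring of Procedure 2, and read off the minimal overall storage time $t_s^*(\mathcal{G}_{et}^{(k)})$ together with the interval schedule $\Delta(\mathcal{W}(\mathcal{C}_{\mathcal{G}_{et}^{(k)}}))$; this is the step that converts the implicit storage term $\zeta_J$ into the explicit objective $t_s$. Third, I would equip the population with a scalarized multi-objective fitness combining $t_s^*$, the aggregate $B_F$, and $|\mathcal{P}|$ through the weighting coefficients $\omega^*$ and $\delta$ already introduced, retain the non-dominated candidates, and output the resulting set of entanglement throughput trees as the optimal network states.

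The main obstacle is that the three objectives genuinely conflict — shortening $|\mathcal{P}|$ may forgo high-throughput multi-hop links, and raising $B_F$ can enlarge the conflict graph and hence $t_s$ — so "parallel" optimization cannot mean one scalar minimum and must be read in the Pareto sense. The substantive part is therefore to show that the composite procedure attains (or, in its randomized form, converges to) a point on the Pareto front: one must check that the probabilistic selection of Step 3 does not systematically exclude non-dominated trees, and that the weighted coloring of Lemma 2 returns the true per-tree minimum $t_s^*$ rather than a merely feasible schedule. I expect the cleanest route is to argue correctness objective-by-objective, each objective being handled by a subroutine whose guarantee is inherited from Lemma 1 or Lemma 2, and then to certify Pareto-optimality of the output by a dominance check over the generated population — deferring the question of how closely the population approximates the true front to the convergence properties of the swarm-intelligence search that drives the construction.
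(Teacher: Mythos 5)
Your proposal follows essentially the same route as the paper's own proof: the paper's Algorithm 1 likewise iterates the stochastic tree-construction method $\mathcal{M}$ over the decision-variable set ${\rm X}$, evaluates each candidate tree's $t_{s}^{*}$, $B_{F}^{*}$, and $\left|{\rm {\mathcal P}}^{*}\right|$ via the conflict-graph weighted coloring of Lemma 2, maintains the nondominated set $\kappa$ under the Pareto dominance relation $\angle$, and updates the entanglement utilities through a scalarized objective (the paper introduces fresh weights $\rho$, $\varsigma$, $\Upsilon$ for this rather than reusing ${{\omega }^{*}}$ and $\delta$). The convergence-to-the-Pareto-front concern you raise is legitimate but is also left unaddressed by the paper itself, which concludes after stating the algorithm.
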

\begin{proof}
The aim of the procedure is to find an optimal entanglement throughput tree ${\rm {\mathcal G}}_{et}^{*} $ for which the $t_{s} \left({\rm {\mathcal G}}_{et}^{*} \right)$ overall storage time is minimal, that is, 
\begin{equation} \label{ZEqnNum603596} 
t_{s} \left({\rm {\mathcal G}}_{et}^{*} \right)=t_{s}^{*} \left({\rm {\mathcal G}}_{et}^{*} \right);                                                  
\end{equation} 
the $B_{F} \left({\rm {\mathcal G}}_{et}^{*} \right)$ entanglement throughput for all links is maximal, 
\begin{equation} \label{ZEqnNum624672} 
B_{F} \left({\rm {\mathcal G}}_{et}^{*} \right)=B_{F}^{*} \left({\rm {\mathcal G}}_{et}^{*} \right);                                                
\end{equation} 
and the $\left|{\rm {\mathcal P}}\left({\rm {\mathcal G}}_{et}^{*} \right)\right|$ number of entangled links of a path is minimal, thus 
\begin{equation} \label{ZEqnNum206696} 
\left|{\rm {\mathcal P}}\left({\rm {\mathcal G}}_{et}^{*} \right)\right|=\left|{\rm {\mathcal P}}^{*} \left({\rm {\mathcal G}}_{et}^{*} \right)\right|,                                              
\end{equation} 
where ${\rm {\mathcal P}}^{*} $ is the shortest path.

The procedure is based on the fact that for a given ${\rm {\mathcal G}}_{et}^{*} $ with conflict graph ${\rm {\mathcal C}}_{{\rm {\mathcal G}}_{et}^{*} } $, from the knowledge of ${\rm {\mathcal W}}\left({\rm {\mathcal C}}_{{\rm {\mathcal G}}_{et}^{*} } \right)$ weighted coloring of conflict graph ${\rm {\mathcal C}}_{{\rm {\mathcal G}}_{et}^{*} } $, $\Delta \left({\rm {\mathcal W}}\left({\rm {\mathcal C}}_{{\rm {\mathcal G}}_{et}^{*} } \right)\right)$ time intervals between each time unit of a given cycle, the required objective values of \eqref{ZEqnNum603596}, \eqref{ZEqnNum624672}, and \eqref{ZEqnNum206696} can be determined.

Let ${\rm {\mathcal S}}_{{\rm {\mathcal G}}_{et} }^{*} $ be the set of optimal ${\rm {\mathcal G}}_{et}^{*} $ entanglement throughput trees. Then the aim of the procedure is to determine ${\rm {\mathcal S}}_{{\rm {\mathcal G}}_{et} }^{*} $. 

The problem can be rewritten via a solution set ${\rm X} $ with decision variables \cite{ref8} 
\begin{equation} \label{12)} 
{\rm X} =\left\{x_{1} ,\ldots ,x_{n} \right\},                                                   
\end{equation} 
where $n$ is the number of all links in a given quantum memory utilization graph ${\rm {\mathcal G}}_{m} $, and $x_{i} \in \left\{0,1\right\}$ is defined as
\begin{equation} \label{13)} 
x_{i} =\left\{\begin{array}{l} {1,{\rm \; if\; link\; }i{\rm \; of\; {\mathcal G}}_{m} {\rm \; is\; selected\; by\; {\mathcal M}}} \\ {0,{\rm \; otherwise.\; \; \; \; \; \; \; \; \; \; \; \; \; \; \; \; \; \; \; \; \; \; \; \; \; \; \; \; \; \; \; \; \; \; \; \; \; }} \end{array}\right.  
\end{equation} 
Let ${\rm X} _{{\rm {\mathcal G}}_{et} ,i} $ be a solution $i$ from solution set ${\rm X} _{{\rm {\mathcal G}}_{et} } $. Let ${\rm X} _{{\rm {\mathcal G}}_{et} ,j} \angle {\rm X} _{{\rm {\mathcal G}}_{et} ,i} $ refer to solution ${\rm X} _{{\rm {\mathcal G}}_{et} ,i} $ dominating solution ${\rm X} _{{\rm {\mathcal G}}_{et} ,j} $, which is 
\begin{equation} \label{14)} 
 \begin{split}
{\rm X}_{{\mathcal{G}}_{et},j}\angle {\rm X}_{{\mathcal{G}}_{et},i}:&\left\{t^*_s\left({\mathcal{G}}_{et},i\right)\le t^*_s\left({\mathcal{G}}_{et},j\right)\right., \\ 
&B^*_F\left({\mathcal{G}}_{et},i\right)\ge B^*_F\left({\mathcal{G}}_{et},j\right), \\ 
&\left.\left|{\mathcal{P}}^*\left({\mathcal{G}}_{et},i\right)\right|\le \left|{\mathcal{P}}^*\left({\mathcal{G}}_{et},j\right)\right|\right\}, \end{split}
\end{equation} 
and for these relations, there is at least strict inequality \cite{ref8}. If ${\rm X} _{{\rm {\mathcal G}}_{et} ,i} $ dominates all other possible solutions, then ${\rm X} _{{\rm {\mathcal G}}_{et} ,i} $ is a set of nondominated solutions. 

For each set ${\rm X} $, let $\kappa $ refer to the set that contains the best nondominated solutions that have been found at a particular iteration. If $\kappa $ changes, then the entanglement utilities of the links are updated. If $\kappa $ is stationary, then the elements of set $\kappa $ are used to update the entanglement utilities. The former serves as an improvement in the exploration process whereas the latter aims to yield more information via the best solutions \cite{ref8}.

At a given graph ${\rm {\mathcal G}}_{m} $ of quantum memory utilization, the procedure ${\rm {\mathcal P}}_{Q} $ for the optimization of the quantum layer is defined as follows. The output of ${\rm {\mathcal P}}_{Q} $ is an optimal set ${\rm {\mathcal S}}_{{\rm {\mathcal G}}_{et} }^{*} $ of entanglement throughput trees that realize the conditions of the multi-objective optimization function.
The steps of the quantum layer optimization are summarized in Algorithm 1.

\setcounter{algocf}{0}
\begin{algo}
  \DontPrintSemicolon
\caption{\textit{Quantum Layer Optimization}}

\textbf{Step 1}. Set ${\rm {\mathcal S}}_{{\rm {\mathcal G}}_{et} }^{*} =\emptyset $, and for all entangled connects, initialize $\lambda '_{E_{{\rm L}_{l} } \left(I,J\right)} $ as 
\[\lambda '_{E_{{\rm L}_{l} } \left(I,J\right)} =\lambda _{E_{{\rm L}_{l} } \left(I,J\right)} .\] 

\textbf{Step 2}. Determine ${\rm X} $, and apply method ${\rm {\mathcal M}}$ for building an ${\rm {\mathcal G}}_{et} $ entanglement throughput tree as ${\rm {\mathcal M}}\left({\rm X} \right)={\rm {\mathcal G}}_{et}^{{\rm X} } $. For a given ${\rm {\mathcal G}}_{et}^{{\rm X} } $, determine the $\alpha _{{\rm {\mathcal G}}_{et} } $ optimal assignment of stored entanglement, the $t_{s}^{*} $ overall minimal storage time, the $B_{F}^{*} $ maximal entanglement throughput, and the $\left|{\rm {\mathcal P}}^{*} \right|$ minimal number of entangled links, where ${\rm {\mathcal P}}^{*} $ is a shortest path. 

\textbf{Step 3}. If ${\rm X} $ is not dominated by any ${\rm X} _{{\rm {\mathcal G}}_{et} } \in \kappa $, that is, ${\rm X} _{{\rm {\mathcal G}}_{et} } \angle {\rm X} $ for any ${\rm X} _{{\rm {\mathcal G}}_{et} } $, then update $\kappa $ as 
\[\kappa =\kappa \bigcup {\rm X} -\left\{\left. {\rm X} _{{\rm {\mathcal G}}_{et} } \right|{\rm X} _{{\rm {\mathcal G}}_{et} } \angle {\rm X} \right\}.\] 

\textbf{Step 4}. If $\kappa $ has been updated to $\kappa '$, then update the entanglement utility as $\lambda '_{E_{{\rm L}_{l} } \left(I,J\right)} =\lambda _{E_{{\rm L}_{l} } \left(I,J\right)} $ for all entangled connects of ${\rm {\mathcal G}}_{m} $. 

\textbf{Step 5}. If $\kappa $ has not been updated, then for all ${\rm X} _{{\rm {\mathcal G}}_{et} } \in \kappa $ compute 
\[\Pi =\rho t_{s} \left({\rm {\mathcal G}}_{et}^{{\rm X} } \right)+\varsigma B_{F}^{*} \left({\rm {\mathcal G}}_{et}^{{\rm X} } \right)+\Upsilon \left|{\rm {\mathcal P}}^{*} \left({\rm {\mathcal G}}_{et}^{{\rm X} } \right)\right|,\] 
where $t_{s} \left({\rm {\mathcal G}}_{et}^{{\rm X} } \right)$ is the storage time associated to tree ${\rm {\mathcal G}}_{et}^{{\rm X} } $, $B_{F}^{*} \left({\rm {\mathcal G}}_{et}^{{\rm X} } \right)$ is the maximal entanglement throughput associated to tree ${\rm {\mathcal G}}_{et}^{{\rm X} } $, $\left|{\rm {\mathcal P}}^{*} \left({\rm {\mathcal G}}_{et}^{{\rm X} } \right)\right|$ is the number of entangled links of the ${\rm {\mathcal P}}^{*} \left({\rm {\mathcal G}}_{et}^{{\rm X} } \right)$ shortest path of ${\rm {\mathcal G}}_{et}^{{\rm X} } $, while $\rho $, $\varsigma $, and $\Upsilon $ are weighting coefficients. For all entangled links of graph ${\rm {\mathcal G}}_{m} $, update $\lambda '_{E_{{\rm L}_{l} } \left(I,J\right)} $ as 
\[\lambda '_{E_{{\rm L}_{l} } \left(I,J\right)} =\left(1-\Pr \left(I,J\right)\right)\lambda '_{E_{{\rm L}_{l} } \left(I,J\right)} +\Pr \left(I,J\right)\Pi .\] 

\textbf{Step 6}. Output optimal entanglement throughput tree set ${\rm {\mathcal S}}_{{\rm {\mathcal G}}_{et} }^{*} $.

\end{algo}

Note if there are no nondominated solutions, then the values of the weighting coefficients $\rho $, $\varsigma $, and $\Upsilon $ in Step 5 of Algorithm 1 can be selected according to the actual trade-off requirements. 

The proof is therefore concluded here.
\end{proof}
 
\section{Classical-Layer Optimization}
\label{sec4}
 In this section we characterize the classical layer optimization procedure.

\begin{theorem}
The cost function of classical communications can be minimized by a procedure ${\rm {\mathcal P}}_{C} $. 
\end{theorem}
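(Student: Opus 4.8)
The plan is to prove the statement constructively, by exhibiting the procedure ${\rm {\mathcal P}}_{C}$ explicitly and showing that the cost it acts on admits a nonincreasing, stabilizing iteration. First I would define the total classical communication cost $C_{cl}$ as a weighted aggregate of the three component costs that the quantum layer forces onto the auxiliary classical channels: the signaling cost $C_{t_{s}}$ tied to the minimization of quantum memory usage time, the signaling cost $C_{B_{F}}$ tied to the maximization of entanglement throughput, and the command cost $C_{\mathcal{P}}$ of the classical messages that the switcher $S$ must issue to set up and verify the shortest path ${\rm {\mathcal P}}^{*}$ of a given optimal tree ${\rm {\mathcal G}}_{et}^{*}$. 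Formally I would set
\begin{equation}
C_{cl} = \mu_{1} C_{t_{s}} + \mu_{2} C_{B_{F}} + \mu_{3} C_{\mathcal{P}},
\end{equation}
with nonnegative weights $\mu_{1},\mu_{2},\mu_{3}$, so that minimizing $C_{cl}$ is itself a multi-objective problem coupled to the output set ${\rm {\mathcal S}}_{{\rm {\mathcal G}}_{et} }^{*}$ produced by the quantum-layer procedure ${\rm {\mathcal P}}_{Q}$ of the previous theorem.

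The core of ${\rm {\mathcal P}}_{C}$ is a bacteria-foraging search over the space of admissible classical-communication configurations. I would model each candidate configuration as a bacterium whose position encodes the set of classical messages and commands used to realize a given ${\rm {\mathcal G}}_{et}^{*}$, and take $C_{cl}$, augmented by a cell-to-cell attractant/repellent term, as the health function to be minimized. The iteration then carries out the standard foraging steps: \emph{chemotaxis} (a tumble followed by runs accepted only along a descent direction of $C_{cl}$), \emph{swarming} (the attractant-repellent coupling that pulls the population toward low-cost regions while preventing collapse), \emph{reproduction} (duplication of the healthiest half of the population, i.e.\ the configurations of lowest accumulated cost), and \emph{elimination-dispersal} (random restarts that let the search escape poor local minima). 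Because each chemotactic run accepts a move only when it strictly lowers $C_{cl}$, and reproduction is monotone in accumulated health, the per-bacterium cost sequence is nonincreasing, which is what drives the aggregate cost downward. I would present these steps as a numbered Procedure, mirroring Procedures~1 and~2.

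To account for the fact that the link attributes (fidelity, memory status, link noise) are only partially known to $S$, I would wrap the evaluation of $C_{cl}$ in the probabilistic multi-objective uncertainty characterization: each component cost is replaced by its expectation over the distribution ${\Pr }_{\text{L}_l}\left(E\left(x,y\right)\right)$ of the entangled-link levels, so that the quantity the bacteria actually minimize is $\mathbb{E}\left[C_{cl}\right]$ rather than a deterministic value. The procedure outputs the configuration attaining the minimal $\mathbb{E}\left[C_{cl}\right]$, and with the monotone cost sequence established above this concludes the construction of ${\rm {\mathcal P}}_{C}$.

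The main obstacle I expect is twofold. The delicate modeling point is writing $C_{t_{s}}, C_{B_{F}}, C_{\mathcal{P}}$ so that they genuinely reflect the classical overhead induced by the three quantum-layer objectives and remain commensurable under a single weighting; this is exactly where the coupling back to ${\rm {\mathcal P}}_{Q}$ and to ${\rm {\mathcal S}}_{{\rm {\mathcal G}}_{et} }^{*}$ must be made precise. The harder conceptual point is the word ``minimized'': bacteria foraging is a metaheuristic, so strictly it converges to a near-optimal, generically local, solution rather than a provable global optimum. The honest argument is therefore that the strict-descent acceptance rule in chemotaxis together with health-ordered reproduction yields a nonincreasing, hence convergent, cost sequence, while elimination-dispersal supplies the global-search escape mechanism; I would phrase the final claim in those terms rather than assert exact global optimality.
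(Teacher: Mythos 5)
Your proposal is correct and follows essentially the same route as the paper: a classical cost function decomposed into the three components induced by $t_{s}^{*}$, $B_{F}^{*}$ and $\left|{\rm {\mathcal P}}^{*}\right|$, minimized by a bacteria-foraging procedure whose chemotactic descent-acceptance rule, attractant/repellent swarming term, reproduction and elimination-dispersal steps correspond exactly to the paper's $\Theta_{i}\left(j,k,l\right)$ indexing, the update $\Theta_{i}\left(j+1,k,l\right)=\Theta_{i}\left(j,k,l\right)+c\left(i\right)u\left(j\right)$, the cost $C_{N}$ of \eqref{ZEqnNum535604}, and the threshold-based state removal in Algorithm 2. Your expectation wrapper over link-existence probabilities is handled by the paper separately in the later uncertainty subsection rather than in this proof, and your explicit caveat about local versus global optimality is, if anything, more honest than the paper's own argument.
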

\begin{proof}
Let 
\begin{equation}
\begin{split}
  & {{f}_{t_{s}^{*},B_{F}^{*},\left| {{\mathcal{P}}^{*}} \right|}}\left( {{\Theta }_{i}} \right) \\ 
 & =\left( \sum\limits_{g=1}^{{{N}_{t_{s}^{*}}}}{{{S}_{t_{s}^{*},g}}\left( i \right)+c_{{{N}_{t_{s}^{*},g}}}^{L}\left( i \right)} \right)+\left( \sum\limits_{g=1}^{{{N}_{B_{F}^{*}}}}{{{S}_{B_{F}^{*},g}}\left( i \right)+c_{{{N}_{B_{F}^{*},g}}}^{L}\left( i \right)} \right)+\left( \sum\limits_{g=1}^{{{N}_{\left| {{\mathcal{P}}^{*}} \right|}}}{{{S}_{\left| {{\mathcal{P}}^{*}} \right|,g}}\left( i \right)+c_{{{N}_{\left| {{\mathcal{P}}^{*}} \right|}},g}^{L}\left( i \right)} \right) \\ 
\end{split}
\end{equation}
be the cost function of classical communication of the multilayer optimization procedure, where $\Theta _{i} \in {\rm {\rm R}}^{p} $ is a $p$-dimensional real vector of an $i$-th system state of the quantum network, $N_{t_{s}^{*} } $, $N_{B_{F}^{*} } $, $N_{\left|{\rm {\mathcal P}}^{*} \right|} $ are the number of nodes that require the determination of optimal $t_{s}^{*} $, $B_{F}^{*} $ and $\left|{\rm {\mathcal P}}^{*} \right|$, $S_{t_{s}^{*} ,g} \left(i\right)$, $S_{B_{F}^{*} ,g} \left(i\right)$, $S_{\left|{\rm {\mathcal P}}^{*} \right|,g} \left(i\right)$ refer to the number of classical steps required to find $t_{s}^{*} $, $B_{F}^{*} $ and $\left|{\rm {\mathcal P}}^{*} \right|$ for a particular node $g$ of network $N$, while $c_{N_{t_{s}^{*} ,g} }^{L} \left(i\right)$, $c_{N_{B_{F}^{*} ,g} }^{L} \left(i\right)$, and $c_{N_{\left|{\rm {\mathcal P}}^{*} \right|,g} }^{L} \left(i\right)$ are the costs of classical link $L$ used for the determination of $t_{s}^{*} $, $B_{F}^{*} $ and $\left|{\rm {\mathcal P}}^{*} \right|$ for a given $g$.

Then, let introduce indices for $\Theta _{i} $ as
\begin{equation} \label{16)} 
\Theta _{i} \left(j,k,l\right),                                                    
\end{equation} 
where $j$ is the index of a desired optimal system state, $k$ is the index of an optimal system state reproduction step, $l$ is the index of a non-optimal system state event \cite{ref11}.  

Let assume that there is a set of $S$ sub-states $\left\{\Theta _{1} ,\ldots ,\Theta _{S} \right\}$ in the network, then the $T$ total network state is evaluated as
\begin{equation} \label{17)} 
T\left(j,k,l\right)=\left\{\Theta _{i} \left. \left(j,k,l\right)\right|i=1,\ldots ,S\right\}.                                      
\end{equation} 
Let $f_{t_{s}^{*} ,B_{F}^{*} ,\left|{\rm {\mathcal P}}^{*} \right|} \left(\Theta _{i} \left. \left(j,k,l\right)\right|\right)$ be the cost function of classical communication at a given $\Theta _{i} \left. \left(j,k,l\right)\right|$. Then, if 
\begin{equation} \label{18)} 
f_{t_{s}^{*} ,B_{F}^{*} ,\left|{\rm {\mathcal P}}^{*} \right|} \left(\Theta _{i} \left(j+1,k,l\right)\right)<f_{t_{s}^{*} ,B_{F}^{*} ,\left|{\rm {\mathcal P}}^{*} \right|} \left(\Theta _{i} \left(j,k,l\right)\right),                          
\end{equation} 
then the system state evolves from $j$ to $j+1$ as 
\begin{equation} \label{ZEqnNum287477} 
\Theta _{i} \left(j+1,k,l\right)=\Theta _{i} \left(j,k,l\right)+c\left(i\right)u\left(j\right),                             
\end{equation} 
where $c\left(i\right)$ is the number of random system states, while $u\left(j\right)$ quantifies a unit cost of system change \cite{ref11, ref12}.   

Therefore, for a set of $S$ sub-states $\left\{\Theta _{1} ,\ldots ,\Theta _{S} \right\}$ and current vector $\Theta \in {\mathbb{R} }^{p} $, the $C_{N} $ total cost of classical communication is yielded as
\begin{equation} \label{20)} 
C_{N} \left(\Theta ,\Theta _{i} \left(j,k,l\right)\right)=\sum _{i=1}^{S}f_{t_{s}^{*} ,B_{F}^{*} ,\left|{\rm {\mathcal P}}^{*} \right|} \left(\Theta _{i} \left(j,k,l\right)\right),i=1,\ldots ,S ,                    
\end{equation} 
which can be rewritten \cite{ref11} as 
\begin{equation} \label{ZEqnNum535604} 
\begin{split}
  & {{C}_{N}}\left( \Theta ,T\left( j,k,l \right) \right) \\ 
 & =\sum\limits_{i=1}^{S}{\left( -\text{A} {{e}^{-{{R}_{\text{A}}}}}\sum\limits_{m=1}^{p}{{{\left( {{\Theta }^{\left( m \right)}}-\Theta _{i}^{\left( m \right)} \right)}^{2}}} \right)}+\sum\limits_{i=1}^{S}{\left( \nu {{e}^{-{{R}_{\nu }}}}\sum\limits_{m=1}^{p}{{{\left( {{\Theta }^{\left( m \right)}}-\Theta _{i}^{\left( m \right)} \right)}^{2}}} \right)}  
\end{split}
\end{equation} 
where ${\rm A} $ is the distribution-entity of a current system state, $R_{{\rm A} } $ is the information transmission rate of ${\rm A} $, $\nu $ is the distribution-entity of a system state, $R_{\nu } $ is the information transmission rate of $\nu $, while $\Theta ^{\left(m\right)} $ is the $m$-th element of a current network state vector $\Theta $, and $\Theta _{i}^{\left(m\right)} $ is the $m$-th element of $\Theta _{i} $.

Using $C_{N} \left(\Theta ,T\left(j,k,l\right)\right)$ (see \eqref{ZEqnNum535604}), an environment-dependent cost function $C_{e} $ is defined as
\begin{equation} \label{ZEqnNum133844} 
C_{e} =C_{N} \left(\Theta ,T\left(j,k,l\right)\right)e^{\left(M-f_{t_{s}^{*} ,B_{F}^{*} ,\left|{\rm {\mathcal P}}^{*} \right|} \left(\Theta _{i} \left(j,k,l\right)\right)\right)} , 
\end{equation} 
where $M$ is a tuning parameter \cite{ref11}. 

From \eqref{ZEqnNum133844}, the $F_{cost}^{i} $ cost function at a given $\Theta _{i} \left(j,k,l\right)$ is defined as
\begin{equation} \label{23)} 
F_{cost}^{i} =f_{t_{s}^{*} ,B_{F}^{*} ,\left|{\rm {\mathcal P}}^{*} \right|} \left(\Theta _{i} \left(j,k,l\right)\right)+C_{e} \left(\Theta _{i} \left(j,k,l\right)\right).                              
\end{equation} 
Using the proposed basic model, we define an optimization procedure ${\rm {\mathcal P}}_{C} $ of the classical-layer to achieve a ${{C}_{N}}\left( \Theta ,T\left( j,k,l \right) \right) $ minimized cost function. 
The method is summarized in Algorithm 2.

\begin{algo}
  \DontPrintSemicolon
\caption{\textit{Classical Layer Optimization}}
\textbf{Step 1}. For an $i$-th system state $\Theta _{i} \left(j,k,l\right)$ initialize cost function
\[f_{t_{s}^{*} ,B_{F}^{*} ,\left|{\rm {\mathcal P}}^{*} \right|} \left(\Theta _{i} \left(j,k,l\right)\right)=f_{t_{s}^{*} ,B_{F}^{*} ,\left|{\rm {\mathcal P}}^{*} \right|} \left(\Theta _{i} \left(j,k,l\right)\right)+C_{N} \left(\Theta _{i} \left(j,k,l\right),T\left(j,k,l\right)\right).\] 
Define a random network state vector $\partial \left(i\right)\in {\mathbb{R} }^{p} $, where the $\partial _{m} \left(i\right)$ $m$-th element of $\partial \left(i\right)$ is a uniformly distributed number from the range of $\left[-1,1\right]$. 

\textbf{Step 2}. From $\Theta _{i} \left(j,k,l\right)$, define $\Theta _{i} \left(j+1,k,l\right)$ as
\[\Theta _{i} \left(j+1,k,l\right)=\Theta _{i} \left(j,k,l\right)+c\left(i\right){\textstyle\frac{\partial \left(i\right)}{\sqrt{\left(\partial \left(i\right)\right)^{T} \partial \left(i\right)} }} ,\] 
where $c\left(i\right)$ is the number of random system states. 

\textbf{Step 3}. Using $f_{t_{s}^{*} ,B_{F}^{*} ,\left|{\rm {\mathcal P}}^{*} \right|} \left(\Theta _{i} \left(j,k,l\right)\right)$, determine $f_{t_{s}^{*} ,B_{F}^{*} ,\left|{\rm {\mathcal P}}^{*} \right|} \left(\Theta _{i} \left(j+1,k,l\right)\right)$ as

$f_{t_{s}^{*} ,B_{F}^{*} ,\left|{\rm {\mathcal P}}^{*} \right|} \left(\Theta _{i} \left(j+1,k,l\right)\right)=f_{t_{s}^{*} ,B_{F}^{*} ,\left|{\rm {\mathcal P}}^{*} \right|} \left(\Theta _{i} \left(j,k,l\right)\right)+C_{N} \left(\Theta _{i} \left(j+1,k,l\right),T\left(j+1,k,l\right)\right).$ If 
\[f_{t_{s}^{*} ,B_{F}^{*} ,\left|{\rm {\mathcal P}}^{*} \right|} \left(\Theta _{i} \left(j+1,k,l\right)\right)<f_{t_{s}^{*} ,B_{F}^{*} ,\left|{\rm {\mathcal P}}^{*} \right|} \left(\Theta _{i} \left(j,k,l\right)\right),\] 
update $f_{t_{s}^{*} ,B_{F}^{*} ,\left|{\rm {\mathcal P}}^{*} \right|} \left(\Theta _{i} \left(j,k,l\right)\right)$ as 
\[f_{t_{s}^{*} ,B_{F}^{*} ,\left|{\rm {\mathcal P}}^{*} \right|} \left(\Theta _{i} \left(j,k,l\right)\right)=f_{t_{s}^{*} ,B_{F}^{*} ,\left|{\rm {\mathcal P}}^{*} \right|} \left(\Theta _{i} \left(j+1,k,l\right)\right).\] 
Update $\Theta _{i} \left(j+1,k,l\right)$ as
\[\Theta _{i} \left(j+1,k,l\right)=\Theta _{i} \left(j,k,l\right)+c\left(i\right){\textstyle\frac{\partial \left(i\right)}{\sqrt{\left(\partial \left(i\right)\right)^{T} \partial \left(i\right)} }} ,\] 
and compute $f_{t_{s}^{*} ,B_{F}^{*} ,\left|{\rm {\mathcal P}}^{*} \right|} \left(\Theta _{i} \left(j+1,k,l\right)\right)$. 

\textbf{Step 4}. Increase $i$, $i=i+1$. If $i<S$, apply steps 1-3 for the current network sub-state. If $j<n_{G\left(\tilde{\Theta }\right)} $, where $n_{G\left(\tilde{\Theta }\right)} $ is the total number of iteration steps on $j$ to reach an optimal global state $G(\tilde{\Theta })$, then increase $j$, $j=j+1$. 

\textbf{Step 5}. For all $i$,  determine 
\[F_{cost}^{i} =\sum _{j=1}^{n_{G\left(\tilde{\Theta }\right)} +1}f_{t_{s}^{*} ,B_{F}^{*} ,\left|{\rm {\mathcal P}}^{*} \right|} \left(\Theta _{i} \left(j,k,l\right)\right) +C_{e} \left(\Theta _{i} \left(j,k,l\right)\right).\] 
Remove system states for which $F_{cost}^{i} \ge \chi _{C} $, where $\chi _{C} $ is a threshold on $F_{cost}^{i} $ to get the minimized cost $F_{cost}^{\min } $ as
\[F_{cost}^{\min } =\sum _{i=1}^{S-\wp }F_{cost}^{i}  ,\] 
where $\wp $ is the total number of removed system states. For a given $E_{k} $ and $E_{l} $ expected values of $k$ and $l$, if $k<E_{k} $ then increase $k$, $k=k+1$, and if $l<E_{l} $ increase $l$, $l=l+1$.  
\end{algo}
These results conclude the proof.
 \end{proof}
 
\subsection{Large-Constrained Optimization}

The optimization efficiency can be further improved for a large constrained network scenario. This step can be replayed by a different approach, which allows an  optimization of the classical layer for an arbitrary constrained setting. The solution is based on the idea of system state merging. 
\begin{lemma} 
The optimization of the classical-layer can be extended to a large-constrained optimization by state merging. 
\end{lemma}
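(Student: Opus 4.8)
The plan is to obtain the extension as a controlled reduction of the number of tracked system sub-states, so that the descent performed by Algorithm~2 is carried out on a coarser but cost-equivalent family. First I would introduce a merging operator $\mu$ on the sub-state set $\{\Theta_1,\ldots,\Theta_S\}$: call two sub-states $\Theta_a,\Theta_b$ mergeable when their squared separation $\sum_{m=1}^{p}\left(\Theta_a^{(m)}-\Theta_b^{(m)}\right)^2$ lies below a merging threshold $\vartheta$, and replace each maximal mergeable cluster by a single representative $\bar\Theta$ defined as the cost-weighted centroid of its members. This yields a reduced family $\{\bar\Theta_1,\ldots,\bar\Theta_{S'}\}$ with $S'\le S$; the index structure $(j,k,l)$ is inherited by the representatives, so the chemotaxis--reproduction--dispersal bookkeeping of Algorithm~2 is left intact.

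Next I would show that $C_N$ is stable under $\mu$. Both kernels in \eqref{ZEqnNum535604}, namely $-\mathrm{A}e^{-R_{\mathrm{A}}}\|\cdot\|^2$ and $\nu e^{-R_\nu}\|\cdot\|^2$, are continuous and, on the bounded region occupied by the sub-states, Lipschitz in their squared-distance argument; hence collapsing members that lie within $\vartheta$ of one another perturbs each summand by at most a fixed multiple of $\vartheta$, and the aggregate $C_N\left(\Theta,T(j,k,l)\right)$ by at most $O(S\vartheta)$. Choosing $\vartheta$ small — and, where necessary, shrinking it across the outer iterations on $k$ and $l$ — therefore guarantees that the merged cost and the original one share the same minimizer up to a controllable tolerance. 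Running $\mathcal{P}_C$ on $\{\bar\Theta_1,\ldots,\bar\Theta_{S'}\}$ and then assigning each representative's optimized value back to its constituents consequently returns a feasible solution of the original problem.

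I would then recast the procedure simply as Algorithm~2 applied to the merged family, noting that each iteration now costs $O(S'p)$ instead of $O(Sp)$: for a large constrained network, where $S$ is prohibitively large, this is exactly what restores tractability, while the pruning rule $F_{cost}^{i}\ge\chi_C$ of Step~5 is reused verbatim on the representatives. Feasibility with respect to the quantum layer is preserved because a representative inherits the best objective triple $\left(t_s^{*},B_F^{*},|\mathcal{P}^{*}|\right)$ among its members, so the dominance relation \eqref{14)} transferred to the classical layer still holds after merging.

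The main obstacle I expect is error control rather than the construction itself: I must ensure that the perturbations introduced at successive merges do not accumulate into a bias that drives the iterate away from a genuine optimum, and that no sub-state which is uniquely optimal for some active constraint is absorbed into a cluster and lost. I would address both by making $\vartheta$ adaptive — forbidding any merge between candidates that carry distinct best objective values, and letting $\vartheta\to 0$ as the iteration count grows — so that in the limit the merged dynamics coincide with the unmerged $\mathcal{P}_C$ and the minimization guarantee already established for Algorithm~2 carries over to the large-constrained setting.
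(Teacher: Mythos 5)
Your construction is internally coherent, but it takes a genuinely different route from the paper's. You read ``state merging'' as a coarsening operation: cluster sub-states whose squared separation falls below a threshold $\vartheta$, replace each cluster by a cost-weighted centroid, run ${\rm {\mathcal P}}_{C}$ on the reduced family of $S'\le S$ representatives, and control the induced perturbation of $C_{N}$ by a Lipschitz estimate together with an adaptive $\vartheta$. The paper does not reduce the number of states at all: it defines the merged state of three sub-states as $\Theta _{M} \left(j,k,l\right)=\Theta _{a} \left(j,k,l\right)+\Phi \left(\Theta _{b} \left(j,k,l\right)-\Theta _{c} \left(j,k,l\right)\right)$ with a merging factor $\Phi \in \left[0,1\right]$ (see \eqref{ZEqnNum904315}), i.e., a differential-evolution-style recombination along a difference vector, and then updates each $\Theta _{i} \left(j,k,l\right)$ componentwise by a stochastic crossover among $\Theta _{i}$, $\Theta _{M}$ and the incumbent best state $\Theta ^{*}$, accepting or rejecting against the objective $J$ of \eqref{ZEqnNum495559}. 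What the paper's operator buys is enriched exploration of a heavily constrained landscape, since the difference vector adapts the search direction to the current population; what yours buys is an actual reduction of the per-iteration cost from $O\left(Sp\right)$ to $O\left(S'p\right)$, which addresses the ``large'' in large-constrained more directly. Both are defensible readings of the (loosely stated) lemma; note only that your accuracy claims --- the $O\left(S\vartheta \right)$ bound on the change in $C_{N}$ and the assertion that the merged and unmerged problems share a minimizer up to a controllable tolerance --- are asserted rather than proved, although the paper's own proof is likewise essentially definitional and supplies no convergence guarantee for its operator either.
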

\begin{proof} 
An extended model of the classical-layer optimization is as follows. 

Let $N_{t_{s}^{*} } $, $N_{B_{F}^{*} } $, $N_{\left|{\rm {\mathcal P}}^{*} \right|} $ be the number of nodes that require the determination of optimal $t_{s}^{*} $, $B_{F}^{*} $ and $\left|{\rm {\mathcal P}}^{*} \right|$. Then, let $J$ be the objective function \cite{ref13}, as 
\begin{equation} \label{ZEqnNum495559} 
J=\min \sum\limits_{t=1}^{T}{\left( {{\alpha }_{t_{s}^{*}}}\left( t \right)+{{\alpha }_{B_{F}^{*}}}\left( t \right)+{{\alpha }_{\left| {{\mathcal{P}}^{*}} \right|}}\left( t \right) \right)},
\end{equation} 
where 
\begin{equation}
{{\alpha }_{t_{s}^{*}}}\left( t \right)=\sum\limits_{g=1}^{{{N}_{t_{s}^{*}}}}{{{S}_{t_{s}^{*},g}}\left( t \right)}+c_{{{N}_{t_{s}^{*},g}}}^{L}\left( t \right),
\end{equation}
\begin{equation}
{{\alpha }_{B_{F}^{*}}}\left( t \right)=\sum\limits_{g=1}^{{{N}_{B_{F}^{*}}}}{{{S}_{B_{F}^{*},g}}\left( t \right)}+c_{{{N}_{B_{F}^{*},g}}}^{L}\left( t \right),
\end{equation}
and
\begin{equation}
{{\alpha }_{\left| {{\mathcal{P}}^{*}} \right|}}\left( t \right)=\sum\limits_{g=1}^{{{N}_{\left| {{\mathcal{P}}^{*}} \right|}}}{{{S}_{\left| {{\mathcal{P}}^{*}} \right|,g}}\left( t \right)}+c_{{{N}_{\left| {{\mathcal{P}}^{*}} \right|,g}}}^{L}\left( t \right),
\end{equation}
where the coefficients $S_{t_{s}^{*} ,g} \left(t\right)$, $S_{B_{F}^{*} ,g} \left(t\right)$, $S_{\left|{\rm {\mathcal P}}^{*} \right|,g} \left(t\right)$ refer to the number of classical steps required to find $t_{s}^{*} $, $B_{F}^{*} $ and $\left|{\rm {\mathcal P}}^{*} \right|$ at a particular network node $g$ and time $t$, $t=1,\ldots ,T$, while $c_{N_{t_{s}^{*} ,g} }^{L} \left(t\right)$, $c_{N_{B_{F}^{*} ,g} }^{L} \left(t\right)$, $c_{N_{\left|{\rm {\mathcal P}}^{*} \right|,g} }^{L} \left(t\right)$ are the costs of classical link $L$ at a particular $g$ and $t$.
Let assume that $\Theta _{a} \left(j,k,l\right)$, $\Theta _{b} \left(j,k,l\right)$ and $\Theta _{c} \left(j,k,l\right)$ are some system states of the classical layer subject of state merging. From these sub-states, the $\Theta _{M} \left(j,k,l\right)$ merged system is defined as
\begin{equation} \label{ZEqnNum904315} 
\Theta _{M} \left(j,k,l\right)=\Theta _{a} \left(j,k,l\right)+\Phi \left(\Theta _{b} \left(j,k,l\right)-\Theta _{c} \left(j,k,l\right)\right),                    
\end{equation} 
where $\Phi $ is a merging factor, $\Phi \in \left[0,1\right]$. 

Using \eqref{ZEqnNum904315}, an $i$-th system state $\Theta _{i} \left(j,k,l\right)$ is updated as
\begin{equation} \label{26)} 
\Theta _{i} \left(j,k,l\right)=\left(\tilde{\Theta }_{i}^{a} \left(j,k,l\right),\tilde{\Theta }_{i}^{b} \left(j,k,l\right),\tilde{\Theta }_{i}^{c} \left(j,k,l\right)\right)^{T} ,                           
\end{equation} 
where
\begin{equation} \label{27)} 
\tilde{\Theta }_{i}^{a} \left(j,k,l\right)=\left\{\begin{array}{l} {\Theta _{i} \left(j,k,l\right),{\rm \; if\; }x_{a} >u} \\ {\Theta _{M} \left(j,k,l\right),{\rm otherwise}} \end{array}\right. , 
\end{equation} 
\begin{equation} \label{28)} 
\tilde{\Theta }_{i}^{b} \left(j,k,l\right)=\left\{\begin{array}{l} {\Theta _{i} \left(j,k,l\right),{\rm \; if\; }x_{b} >u} \\ {\Theta ^{*} \left(j,k,l\right),{\rm otherwise}} \end{array}\right. , 
\end{equation} 
and
\begin{equation} \label{29)} 
\tilde{\Theta }_{i}^{c} \left(j,k,l\right)=\left\{\begin{array}{l} {\Theta ^{*} \left(j,k,l\right),{\rm \; if\; }x_{c} >u} \\ {\Theta _{M} \left(j,k,l\right),{\rm otherwise}} \end{array}\right. , 
\end{equation} 
where $u$ is a uniform random number, $x_{a} ,x_{b} $ and $x_{c} $ are random numbers, $x_{a} ,x_{b} ,x_{c} \in \left[0,1\right]$, $\Theta ^{*} \left(j,k,l\right)$ is a system state that minimizes the objective function $J$. Then for $Q\in \left(a,b,c\right)$, the objective function value $J(\tilde{\Theta }_{i}^{Q} (j,k,l))$ is compared with the objective function $J\left(\Theta _{i} \left(j,k,l\right)\right)$ of $\Theta _{i} \left(j,k,l\right)$, and $\tilde{\Theta }_{i}^{Q} \left(j,k,l\right)$ is updated by the following rule: 
\begin{equation} \label{30)} 
{\widetilde{\Theta }}^Q_i\left(j,k,l\right)=\left\{ \begin{array}{l}
{\Theta }_i\left(j,k,l\right),\text{ if }J\left({\widetilde{\Theta }}^Q_i\left(j,k,l\right)\right)<J\left({\Theta }_i\left(j,k,l\right)\right) \\ 
{\widetilde{\Theta }}^Q_i\left(j,k,l\right),\text{ otherwise.} \end{array}
\right. 
\end{equation} 
The proof is concluded here.
\end{proof}

\subsection{Cost Uncertainty of Large-Scaled Optimization}
The determination of the minimal cost function \eqref{ZEqnNum495559} is can be approached by an output variable $O_{J} =f\left(\psi _{in} \right)$, where $\psi _{in} $ is the set of input variables, and $f\left(\cdot \right)$ is a function that transfers the uncertainty from the independent input random variables $\psi _{in} $ to the output variable $O_{J} $ \cite{ref13}. The output set $O_{J} $ can be rewritten as
\begin{equation} \label{1)} 
O_{J} =f\left(q,w_{1} ,\ldots ,w_{z} \right),                                               
\end{equation} 
where $q$ is the set of certain variables, while $w_{i} $ is an input variable under certainty with probability function $\delta _{f_{w_{i} } } $. 

Then, for a given variable $w_{i} $, two $pc\left(w_{i} \right)$ probability concentrations \cite{ref13}, $pc^{\left(1\right)} \left(w_{i} \right)$ and $pc^{\left(2\right)} \left(w_{i} \right)$ are defined as
\begin{equation}
p{{c}^{\left( 1 \right)}}\left( {{w}_{i}} \right)=\left( {{w}_{i,1}},{{\zeta }_{i,1}} \right)
\end{equation}
and
\begin{equation}
p{{c}^{\left( 2 \right)}}\left( {{w}_{i}} \right)=\left( {{w}_{i,2}},{{\zeta }_{i,2}} \right),
\end{equation}
where $w_{i,g} $, is the poth location of $w_{i} $ \cite{ref13}, $g=1,2$, while $\zeta _{i,g} $ is a weighting factor. 

Therefore, at a given $\left(i,g\right)$ parameterization, where $i=1,\ldots ,z$ and $g=1,2$, $O_{J} $ is expressed by variable $O_{J}^{\left(i,g\right)} $ as
\begin{equation} \label{3)} 
O_{J}^{\left(i,g\right)} =f\left(q,w_{i,g} ,\mu _{w_{1} } ,\mu _{w_{2} } ,\ldots ,\mu _{w_{z} } \right), 
\end{equation} 
where $\mu _{w_{i} } $ is the mean of $w_{i} $, while $w_{i,1} $ and $w_{i,2} $ are the poth locations of $w_{i} $. Therefore, the problem is reduced to $2z$ deterministic equations \cite{ref13}, from which the mean and standard deviation of the output random variable $O_{J}^{\left(i,g\right)} $ can be computed.

\section{Performance Evaluation}
\label{sec5}
In this section we study the performance of the proposed quantum layer and classical layer optimization methods. 

\subsection{Quantum Layer Optimization}

To study the convergence of the quantum layer optimization, we characterize the $D\left(\cdot \right)$ closeness function of the elements of the solution set from an optimal Pareto front, and the $\zeta \left(\cdot \right)$ ratio of optimal solutions in the solution set.

Let ${\rm X} _{{\rm {\mathcal G}}_{et} ,i} =\left\{t_{s}^{*} \left({\rm {\mathcal G}}_{et} ,i\right),B_{F}^{*} \left({\rm {\mathcal G}}_{et} ,i\right),\left|{\rm {\mathcal P}}^{*} \left({\rm {\mathcal G}}_{et} ,i\right)\right|\right\}$ be an $i$-th solution  from the solution set ${\rm X} _{{\rm {\mathcal G}}_{et} } \left(N_{it} \right)$ found at a $N_{it} $ finite number of iterations, and let ${\rm X} _{{\rm {\mathcal G}}_{et} }^{\infty } =\left\{t_{s}^{*} \left({\rm {\mathcal G}}_{et} ,z\right),B_{F}^{*} \left({\rm {\mathcal G}}_{et} ,z\right),\left|{\rm {\mathcal P}}^{*} \left({\rm {\mathcal G}}_{et} ,z\right)\right|\right\}$ refer to a solution $z$ from ${\rm X} _{{\rm {\mathcal G}}_{et} } \left(N_{it} \to \infty \right)$ at $N_{it} \to \infty $. Then, let $D\left({\rm X} _{{\rm {\mathcal G}}_{et} ,i} ,{\rm X} _{{\rm {\mathcal G}}_{et} }^{\infty } \right)$ be the distance on the Pareto front \cite{ref8, ref9, ref10} between ${\rm X} _{{\rm {\mathcal G}}_{et} ,i} $ and ${\rm X} _{{\rm {\mathcal G}}_{et} }^{\infty } $, defined as
\begin{equation} \label{ZEqnNum594299} 
D\left({\rm X} _{{\rm {\mathcal G}}_{et} ,i} ,{\rm X} _{{\rm {\mathcal G}}_{et} }^{\infty } \right)={\tfrac{1}{\chi }} \left(A+B+C\right), 
\end{equation} 
with $D\left({\rm X} _{{\rm {\mathcal G}}_{et} ,i} ,{\rm X} _{{\rm {\mathcal G}}_{et} }^{\infty } \right)\in \left[0,1\right]$, and where
\begin{equation} \label{2)} 
A=\frac{\left|t_{s}^{*} \left({\rm {\mathcal G}}_{et} ,z\right)-t_{s}^{*} \left({\rm {\mathcal G}}_{et} ,i\right)\right|}{t_{s}^{*} \left({\rm {\mathcal G}}_{et} ,z\right)} ,                                                             
\end{equation} 
\begin{equation} \label{3)} 
B=\frac{\left|B_{F}^{*} \left({\rm {\mathcal G}}_{et} ,z\right)-B_{F}^{*} \left({\rm {\mathcal G}}_{et} ,i\right)\right|}{B_{F}^{*} \left({\rm {\mathcal G}}_{et} ,z\right)} ,                                                         
\end{equation} 
and
\begin{equation} \label{4)} 
C=\frac{\left|\left|{\rm {\mathcal P}}^{*} \left({\rm {\mathcal G}}_{et} ,z\right)\right|-\left|{\rm {\mathcal P}}^{*} \left({\rm {\mathcal G}}_{et} ,i\right)\right|\right|}{\left|{\rm {\mathcal P}}^{*} \left({\rm {\mathcal G}}_{et} ,z\right)\right|} ,                                                         
\end{equation} 
while $\chi $ is a control parameter, $\chi >0$.

Then, let $\zeta \left({\rm X} _{{\rm {\mathcal G}}_{et} } \left(N_{it} \right),{\rm X} _{{\rm {\mathcal G}}_{et} } \left(N_{it} \to \infty \right)\right)$ be a ratio of the solution sets ${\rm X} _{{\rm {\mathcal G}}_{et} } \left(N_{it} \right)$ and ${\rm X} _{{\rm {\mathcal G}}_{et} } \left(N_{it} \to \infty \right)$ found at finite $N_{it} $ and $N_{it} \to \infty $ as
\begin{equation} \label{ZEqnNum872057} 
\zeta \left({\rm X} _{{\rm {\mathcal G}}_{et} } \left(N_{it} \right),{\rm X} _{{\rm {\mathcal G}}_{et} } \left(N_{it} \to \infty \right)\right)={\tfrac{card\left({\rm X} _{{\rm {\mathcal G}}_{et} } \left(N_{it} \right)\wedge {\rm X} _{{\rm {\mathcal G}}_{et} } \left(N_{it} \to \infty \right)\right)}{card\left({\rm X} _{{\rm {\mathcal G}}_{et} } \left(N_{it} \right)\right)}} ,                             
\end{equation} 
where $card\left({\rm X} _{{\rm {\mathcal G}}_{et} } \left(N_{it} \right)\right)$ is the cardinality of set ${\rm X} _{{\rm {\mathcal G}}_{et} } \left(N_{it} \right)$,  while ${\rm X} _{{\rm {\mathcal G}}_{et} } \left(N_{it} \right)\wedge {\rm X} _{{\rm {\mathcal G}}_{et} } \left(N_{it} \to \infty \right)$ is the intersection of solution sets ${\rm X} _{{\rm {\mathcal G}}_{et} } \left(N_{it} \right)$ and ${\rm X} _{{\rm {\mathcal G}}_{et} } \left(N_{it} \to \infty \right)$ (solutions included by both solution sets). 
Note, since the $\chi $ control parameter in \eqref{ZEqnNum594299} determines the $D$ distance from the optimal set, $\chi $ also affects the ratio of the solution sets \eqref{ZEqnNum872057}. If the value of $\chi $ is high, the $D$ distance in \eqref{ZEqnNum594299} is low, that results in a high cardinality of the intersection set in \eqref{ZEqnNum872057}. If $\chi $ is low, the $D$ distance in \eqref{ZEqnNum594299} is high, therefore the cardinality of the intersection set is small. 

The quantity of \eqref{ZEqnNum594299} for various $N_{it} $ and $\chi $ are depicted in \fref{fig_a1}. From the results it can be concluded that as $N_{it} $ increases, the $D\left({\rm X} _{{\rm {\mathcal G}}_{et} ,i} ,{\rm X} _{{\rm {\mathcal G}}_{et} }^{\infty } \right)$ distance significantly decreases, while the speed of convergence is controllable by $\chi $. It also can be verified that the $\chi $ control parameter has a significant impact on $D\left({\rm X} _{{\rm {\mathcal G}}_{et} ,i} ,{\rm X} _{{\rm {\mathcal G}}_{et} }^{\infty } \right)$, and the $D\left({\rm X} _{{\rm {\mathcal G}}_{et} ,i} ,{\rm X} _{{\rm {\mathcal G}}_{et} }^{\infty } \right)$ distance can be made arbitrarily small via a moderate value for $N_{it} $. 

 \begin{center}
\begin{figure*}[!h]
\begin{center}
\includegraphics[angle = 0,width=0.9\linewidth]{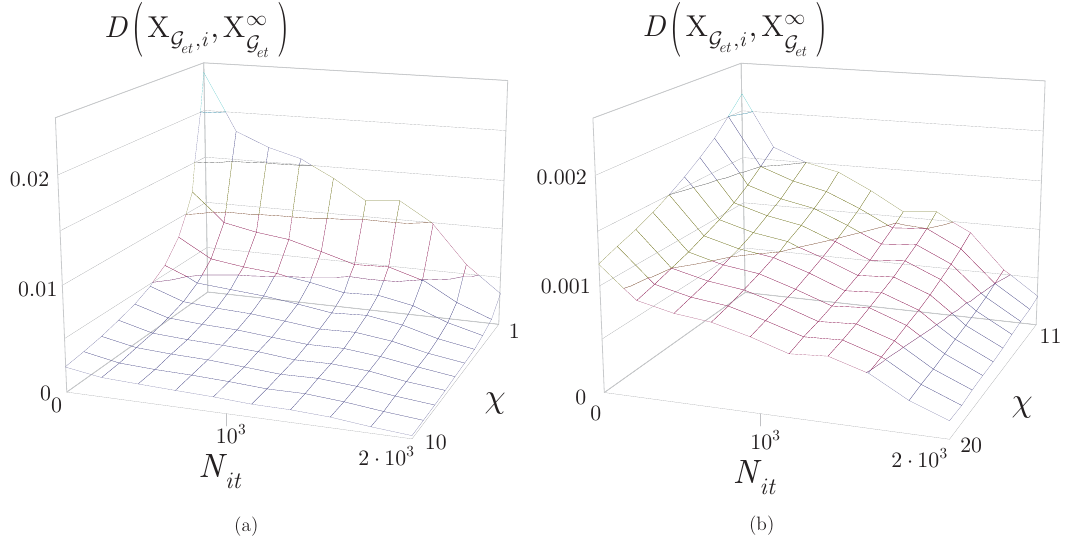}
\caption{The $D\left({\rm X} _{{\rm {\mathcal G}}_{et} ,i} ,{\rm X} _{{\rm {\mathcal G}}_{et} }^{\infty } \right)$ distance in function of $N_{it} $ and $\chi $. (a): The values of $D\left({\rm X} _{{\rm {\mathcal G}}_{et} ,i} ,{\rm X} _{{\rm {\mathcal G}}_{et} }^{\infty } \right)$ for $0<N_{it} \le 2\cdot 10^{3} $ and $1\le \chi \le 10$. (b): The values of $D\left({\rm X} _{{\rm {\mathcal G}}_{et} ,i} ,{\rm X} _{{\rm {\mathcal G}}_{et} }^{\infty } \right)$ for $0<N_{it} \le 2\cdot 10^{3} $ and $11\le \chi \le 20$.} 
 \label{fig_a1}
 \end{center}
\end{figure*}
\end{center}

In \fref{fig_a2}, the quantity of \eqref{ZEqnNum872057} is illustrated for different values of $N_{it} $. As $N_{it} $ increases the $\zeta \left({\rm X} _{{\rm {\mathcal G}}_{et} } \left(N_{it} \right),{\rm X} _{{\rm {\mathcal G}}_{et} } \left(N_{it} \to \infty \right)\right)$ ratio increases significantly, while the $\chi $ control parameter has a moderate effect on the ratio of \eqref{ZEqnNum872057}. The ratio exceeds $0.5$ at $N_{it} \approx 0.5\cdot 10^{3} $, and can be increased to arbitrarily high via a moderate increment in $N_{it} $.

 \begin{center}
\begin{figure*}[!h]
\begin{center}
\includegraphics[angle = 0,width=0.9\linewidth]{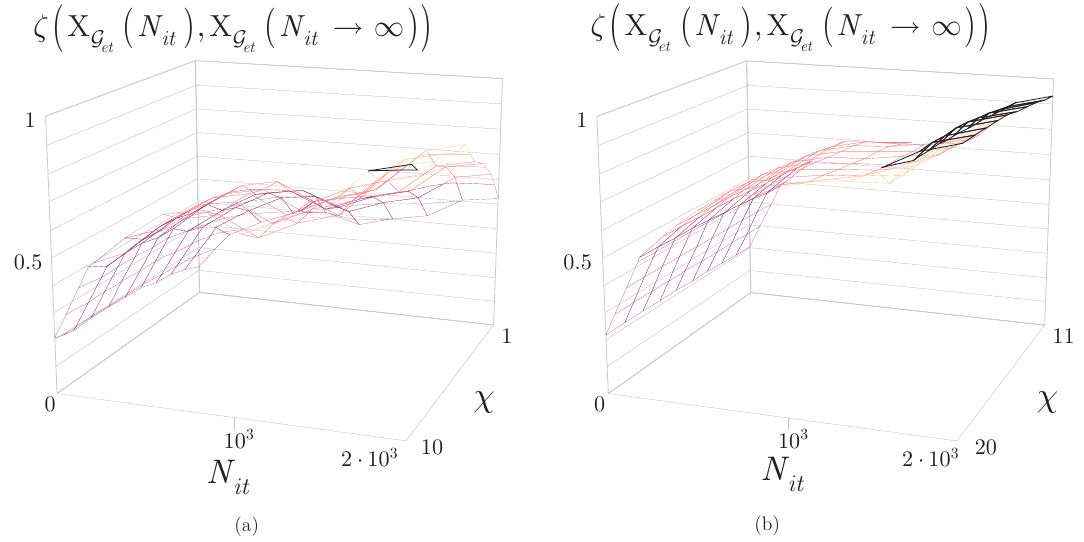}
\caption{The ratio $\zeta \left({\rm X} _{{\rm {\mathcal G}}_{et} } \left(N_{it} \right),{\rm X} _{{\rm {\mathcal G}}_{et} } \left(N_{it} \to \infty \right)\right)$ in function of $N_{it} $ and $\chi $. (a): The values of $\zeta \left({\rm X} _{{\rm {\mathcal G}}_{et} } \left(N_{it} \right),{\rm X} _{{\rm {\mathcal G}}_{et} } \left(N_{it} \to \infty \right)\right)$ for $0<N_{it} \le 2\cdot 10^{3} $ and $1\le \chi \le 10$. (b): The values of $\zeta \left({\rm X} _{{\rm {\mathcal G}}_{et} } \left(N_{it} \right),{\rm X} _{{\rm {\mathcal G}}_{et} } \left(N_{it} \to \infty \right)\right)$ for $0<N_{it} \le 2\cdot 10^{3} $ and $11\le \chi \le 20$.} 
 \label{fig_a2}
 \end{center}
\end{figure*}
\end{center}

The performance of the quantum layer optimization method is therefore approachable via the distance function \eqref{ZEqnNum594299} and ratio \eqref{ZEqnNum872057}. The analysis revealed that at moderate $N_{it} $ values, the precision of the optimization method can be arbitrary high via the selection of the $\chi $ control parameter. It also has been concluded, that a high ratio of the solutions at a finite and moderate $N_{it} $, are identical to the solutions at the limit case of $N_{it} \to \infty $.

\subsection{Classical Layer Optimization}

Let $\phi _{s} \left(i\right)$ be the step-size function defined for an $i$-th state as
\begin{equation} \label{ZEqnNum765012} 
\phi _{s} \left(i\right)=\phi _{\max } -\Delta \left(\phi \right)\cdot \exp \left({\tfrac{-f\left(i,j,k,l\right)}{j}} \right),                                            
\end{equation} 
where  
\begin{equation} \label{7)} 
\Delta \left(\phi \right)=\left(\phi _{\max } -\phi _{\min } \right),                                                            
\end{equation} 
where $\phi _{\min } $ and $\phi _{\max } $ are some lower and upper bounds, $\phi _{s} \left(i\right)\in \left[\phi _{\min } ,\phi _{\max } \right]$, while $f\left(i,j,k,l\right)$ is defined as
\begin{equation} \label{ZEqnNum929340} 
f\left(i,j,k,l\right)=\omega {\tfrac{J^{*} }{J_{P} }} ,                                                                
\end{equation} 
where $\omega $ is a constant (restriction factor \cite{ref18, ref19, ref20}), $J^{*} $ is the minimal cost function (for $J$, see \eqref{ZEqnNum495559}) at a particular parameter setting $\left(i,j,k,l\right)$, $J_{P} $ is the minimal cost function associated to the population.

Then, at an iteration number $j$ let $\kappa \left(j\right)$ be a ratio as
\begin{equation} \label{9)} 
\kappa \left(j\right)={\tfrac{f\left(i,j,k,l\right)}{j}} . 
\end{equation} 
The step-size \eqref{ZEqnNum765012} in function of $\phi _{\min } $ and $\kappa \left(j\right)$ is depicted in \fref{fig_a3}. 

 \begin{center}
\begin{figure*}[!h]
\begin{center}
\includegraphics[angle = 0,width=0.9\linewidth]{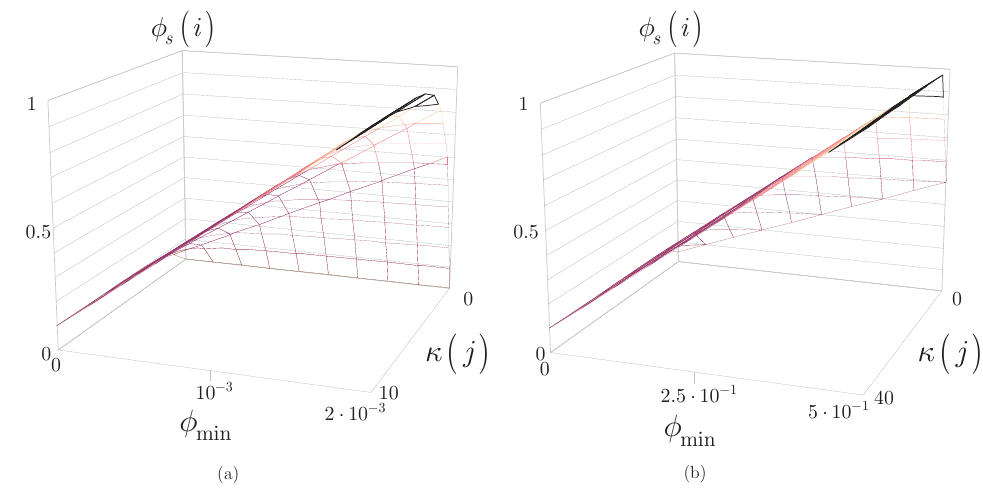}
\caption{(a): The $\phi _{s} \left(i\right)$ step-size function for $\phi _{\min } \in \left[0,2\cdot 10^{-3} \right]$, $\phi _{\max } =500\phi _{\min } $ and $\kappa \left(j\right)\in \left[0,10\right]$. (b): The $\phi _{s} \left(i\right)$ step-size function for $\phi _{\min } \in \left[0,5\cdot 10^{-1} \right]$, $\phi _{\max } =2\phi _{\min } $ and $\kappa \left(j\right)\in \left[0,40\right]$.} 
 \label{fig_a3}
 \end{center}
\end{figure*}
\end{center}

From \eqref{ZEqnNum765012} and \eqref{ZEqnNum929340}, the optimal cost function $J^{*} $ at a particular setting of $\left(i,j,k,l\right)$ is yielded as
\begin{equation} \label{ZEqnNum677599} 
J^{*} =-{\tfrac{\ln \left({\tfrac{\phi _{\max } -\phi _{s} \left(i\right)}{\Delta \left(\phi \right)}} \right)j}{\omega }} J_{P} .                                                           
\end{equation} 
Then, let 
\begin{equation} \label{ZEqnNum633781} 
\kappa \left(j\right)=x,                                                                        
\end{equation} 
from which the iteration number can be rewritten as
\begin{equation} \label{ZEqnNum484416} 
j={\tfrac{f\left(i,j,k,l\right)}{x}} ,                                                                        
\end{equation} 
and also fix $f\left(i,j,k,l\right)$ as
\begin{equation} \label{ZEqnNum807141} 
f\left(i,j,k,l\right)=x^{2} .                                                                       
\end{equation} 
Then, from \eqref{ZEqnNum633781} and \eqref{ZEqnNum807141}, $J^{*} $ can be evaluated at a given $x$ in function $J_{P} $ and $\omega $, as
\begin{equation} \label{ZEqnNum238542} 
J^{*} ={\tfrac{x^{2} J_{P} }{\omega }} .                                                                               
\end{equation} 
From \eqref{ZEqnNum484416} and \eqref{ZEqnNum807141}, the function in \eqref{ZEqnNum677599} can be rewritten as
\begin{equation} \label{15)} 
\begin{split}
   {{J}^{*}}&=-\tfrac{\ln \left( \tfrac{{{\phi }_{\max }}-{{\phi }_{s}}\left( i \right)}{\Delta \left( \phi  \right)} \right)f\left( i,j,k,l \right)}{\omega x}{{J}_{P}} \\ 
 & =-\tfrac{\ln \left( \tfrac{{{\phi }_{\max }}-{{\phi }_{s}}\left( i \right)}{\Delta \left( \phi  \right)} \right)x}{\omega }{{J}_{P}},  
\end{split}
\end{equation} 
therefore the cost function $J^{*} $ at a particular \eqref{ZEqnNum633781} and \eqref{ZEqnNum807141} can also be evaluated in function of the step size \eqref{ZEqnNum765012}. 

The $J^{*} $ cost function values associated to the $\phi _{s} \left(i\right)$ step size function values of \fref{fig_a3} at $J_{P} =1$ and $J_{P} =1$ and $\omega =100$ are depicted in \fref{fig_a4}. 

 \begin{center}
\begin{figure*}[!h]
\begin{center}
\includegraphics[angle = 0,width=0.9\linewidth]{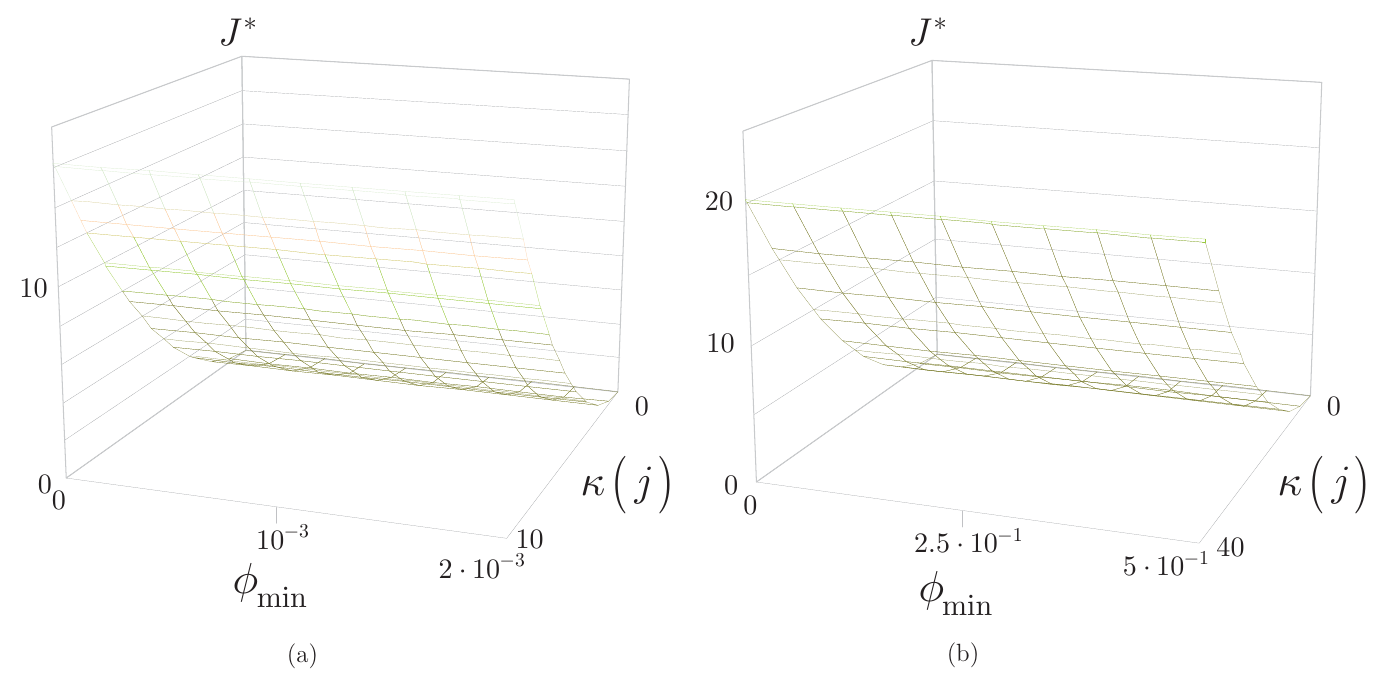}
\caption{ The $J^{*} $ cost function values associated to the $\phi _{s} \left(i\right)$ step size function value, (a): $J_{P} =1$ and $\omega =1$, (b): $J_{P} =1$ and $\omega =100$.}
 \label{fig_a4}
 \end{center}
\end{figure*}
\end{center}

The analysis is therefore revealed that for any $\phi _{s} \left(i\right)$, the cost function $J^{*} $ increases with the $\kappa \left(j\right)$ ratio, and the values of $J^{*} $ are tunable via the control parameter $\omega $ for arbitrary $J_{P} $ and $\kappa \left(j\right)$ values. The proposed classical layer optimization model is therefore flexible, and allows a dynamics adaption for diverse environmental settings.
As future work, our aim is to provide a transmission analysis and comparisons with other schemes.

\section{Conclusions}
\label{sec6}
Here we conceived a multilayer optimization method for the quantum Internet. Multilayer optimization defines separate procedures for the optimization of the quantum layer and the classical layer. Quantum-layer optimization defines a multi-objective function by minimizing the total usage of quantum memories in the quantum nodes, maximizing the entanglement throughput over all entangled links, and reducing the number of entangled links between the arbitrary source and target quantum nodes. We defined the structure of the quantum memory utilization graph and entanglement throughput tree. The classical-layer optimization utilizes the fundaments of swarm intelligence for the minimization of the cost function. Since the proposed multilayer optimization method has no physical layer requirements, it can serve as a useful tool for quantum network communications and future quantum Internet. 

\section*{Acknowledgements}
This work was partially supported by the National Research Development and Innovation Office of Hungary (Project No. 2017-1.2.1-NKP-2017-00001), by the Hungarian Scientific Research Fund - OTKA K-112125 and in part by the BME Artificial Intelligence FIKP grant of EMMI (BME FIKP-MI/SC).

\bibliography{report}
\bibliographystyle{unsrt}

\newpage
\appendix
\setcounter{table}{0}
\setcounter{figure}{0}
\setcounter{equation}{0}
\renewcommand{\thetable}{\Alph{section}.\arabic{table}}
\renewcommand{\thefigure}{\Alph{section}.\arabic{figure}}
\renewcommand{\theequation}{\Alph{section}.\arabic{equation}}

\setlength{\arrayrulewidth}{0.1mm}
\setlength{\tabcolsep}{5pt}
\renewcommand{\arraystretch}{1.5}

\maketitle

\section{Appendix}

\subsection{Notations}
The notations of the manuscript are summarized in  \tref{tab1}.
\begin{center}
\begin{longtable}{||l|p{4.2in}||}
\caption{Summary of notations.}
\label{tab1}
\endfirsthead
\endhead
\hline
\textit{Notation} & \textit{Description} \\ \hline
$l$  & Level of entanglement.  \\ \hline 
$F$ & Fidelity of entanglement.  \\ \hline 
$N$ & Entangled quantum network, $N=\left(V,{\rm {\mathcal S}}\right)$, where $V$ is a set of nodes, ${\rm {\mathcal S}}$ is a set of entangled links. \\ \hline 
${\rm L}_{l} $ & An $l$-level entangled link. For an ${\rm L}_{l} $ link, the hop-distance is $2^{l-1} $. \\ \hline 
$d\left(x,y\right)_{{\rm L}_{l} } $ & Hop-distance of an $l$-level entangled link between nodes $x$ and $y$.  \\ \hline 
$E_{{\rm L}_{l} } \left(x,y\right)$ & Entangled link $E_{{\rm L}_{l} } \left(x,y\right)$ between nodes $x$ and $y$. \\ \hline 
$\lambda _{E_{{\rm L}_{l} } \left(x,y\right)} $ & Initial entanglement utility of link $E_{{\rm L}_{l} } \left(x,y\right)$. \\ \hline 
$\lambda '_{E_{{\rm L}_{l} } \left(x,y\right)} $ & Updated entanglement utility of link $E_{{\rm L}_{l} } \left(x,y\right)$. \\ \hline 
$B_{F} \left(E_{{\rm L}_{l} } \left(x,y\right)\right)$ & Entanglement throughput of a given ${\rm L}_{l} $-level entangled link $E_{{\rm L}_{l} } \left(x,y\right)$ between nodes $\left(x,y\right)$. \\ \hline 
$R_{i} $ & An $i$--th quantum repeater node. \\ \hline 
$S$ & A quantum switcher node, switches between entangled connects using its local quantum memory, and applies entanglement swapping.  \\ \hline 
${\rm {\mathcal G}}_{m} $ & Quantum memory utilization graph, directed graph mapped from the network model with abstracted nodes and links. \\ \hline 
${\rm {\mathcal G}}_{et} $ & Entanglement throughput tree is derived from a ${\rm {\mathcal G}}_{m} $ quantum memory utilization graph. \\ \hline 
$ID$ & Identifier of a node in ${\rm {\mathcal G}}_{et} $, $ID=\left\{A,B,\ldots \right\}$. \\ \hline 
${\rm S}_{I} $ & Set of unvisited neighbor nodes of a particular node $I$. \\ \hline 
$\Omega \left(I,J\right)$ & Cost function between nodes $\left(I,J\right)$. \\ \hline 
$C\left(E_{{\rm L}_{l} } \left(I,J\right)\right)$ & Cost of entangled link $E_{{\rm L}_{l} } \left(I,J\right)$. \\ \hline 
$\zeta _{J} $ & Cost of quantum storage in node $J$. \\ \hline 
$\Pr \left(I,J\right)$ & Probability that from node $I$ a node $J$ is selected. \\ \hline 
$\chi $, $\delta $ & Weighting coefficients in $\Pr \left(I,J\right)$. \\ \hline 
${\rm {\mathcal M}}$ & Method of building an ${\rm {\mathcal G}}_{et} $ entanglement throughput tree. \\ \hline 
${\rm S}'$ & Set of already reached destination nodes. \\ \hline 
${\rm {\mathcal I}}$ & Set of initial nodes. \\ \hline 
${\rm {\mathcal F}}_{I} $ & Set of feasible neighboring nodes to node $I$. \\ \hline 
${\rm {\mathcal D}}$ & Set of destination nodes.    \\ \hline 
$\alpha _{{\rm {\mathcal G}}_{et} } $ & Entanglement assignment cycle, an optimal assignment (scheduling) of stored entanglement. \\ \hline 
$t_{s}^{*} \left({\rm {\mathcal G}}_{et} \right)$ & Minimal overall storage time at a given ${\rm {\mathcal G}}_{et} $. \\ \hline 
${\rm {\mathcal C}}_{{\rm {\mathcal G}}_{et} } $ & Conflict graph of ${\rm {\mathcal G}}_{et} $. In the ${\rm {\mathcal C}}_{{\rm {\mathcal G}}_{et} } $ graph, each vertex corresponds to a directed link of ${\rm {\mathcal G}}_{et} $ (an entangled connection). There is an edge between two vertices of ${\rm {\mathcal C}}_{{\rm {\mathcal G}}_{et} } $, if only the vertices (entangled connections) has a conflict. \\ \hline 
$\tau _{n,t}$ & Indicator variable, $\tau _{n,t} \in \left\{0,1\right\}$, defined as\newline $\tau _{n,t} =\left\{\begin{array}{l} {1,{\rm \; if\; }n{\rm \; is\; associated\; at\; time\; }t{\rm \; \; \; \; }} \\ {0,{\rm \; otherwise.\; \; \; \; \; \; \; \; \; \; \; \; \; \; \; \; \; \; \; \; \; \; \; \; \; }} \end{array}\right. $ \\ \hline 
$\wedge \left(n\right)$ & Set of entangled connects $n'$ that are scheduled in the same time unit $t$, but the physical link can transmit only $n$ or $n'$. \\ \hline 
$w\left(n\right)$ & Weight of an entangled link, defined as \newline $w\left(n\right)=\left\{\begin{array}{l} {1,{\rm \; \; \; \; \; \; \; \; if\; }F_{i} =F_{\max } } \\ {\left\lceil \frac{F_{\max } }{F_{i} } \right\rceil ,{\rm \; if\; }F_{i} <F_{\max } } \end{array}\right. ,$                                        \newline where $F_{i} $ is the fidelity of entangled connection $i$, $F_{\max } $ is the largest fidelity. \\ \hline 
${\rm {\mathcal W}}\left({\rm {\mathcal C}}_{{\rm {\mathcal G}}_{et} } \right)$ & Weighted coloring of conflict graph ${\rm {\mathcal C}}_{{\rm {\mathcal G}}_{et} } $. \\ \hline 
$\Delta \left({\rm {\mathcal W}}\left({\rm {\mathcal C}}_{{\rm {\mathcal G}}_{et} } \right)\right)$ & Time intervals between each time unit of a given cycle. \\ \hline 
${\rm {\mathcal G}}_{et}^{*} $ & Optimal entanglement throughput tree. \\ \hline 
$t_{s} \left({\rm {\mathcal G}}_{et}^{*} \right)$ & Overall storage time at an optimal ${\rm {\mathcal G}}_{et}^{*} $. \\ \hline 
$B_{F} \left({\rm {\mathcal G}}_{et}^{*} \right)$ & Entanglement throughput at an optimal ${\rm {\mathcal G}}_{et}^{*} $. \\ \hline 
$\left|{\rm {\mathcal P}}\left({\rm {\mathcal G}}_{et}^{*} \right)\right|$ & Number of entangled links at an optimal ${\rm {\mathcal G}}_{et}^{*} $. \\ \hline 
${\rm {\mathcal S}}_{{\rm {\mathcal G}}_{et} }^{*} $ & Set of optimal ${\rm {\mathcal G}}_{et}^{*} $ entanglement throughput trees. \\ \hline 
${\rm X} $ & Solution set ${\rm X} $ with decision variables \newline ${\rm X} =\left\{x_{1} ,\ldots ,x_{n} \right\}$,\newline where $n$ is the number of all links in a given quantum memory utilization graph ${\rm {\mathcal G}}_{m} $, and $x_{i} \in \left\{0,1\right\}$ is defined as\newline $x_{i} =\left\{\begin{array}{l} {1,{\rm \; if\; link\; }i{\rm \; of\; {\mathcal G}}_{m} {\rm \; is\; selected\; by\; method\; {\mathcal M}}} \\ {0,{\rm \; otherwise.\; \; \; \; \; \; \; \; \; \; \; \; \; \; \; \; \; \; \; \; \; \; \; \; \; \; \; \; \; \; \; \; \; \; \; \; \; }} \end{array}\right. $ \\ \hline 
${\rm X} _{{\rm {\mathcal G}}_{et} ,j} \angle {\rm X} _{{\rm {\mathcal G}}_{et} ,i} $ & Set ${\rm X} _{{\rm {\mathcal G}}_{et} ,i} $ dominates ${\rm X} _{{\rm {\mathcal G}}_{et} ,j} $. \\ \hline 
$\kappa $ & Set that contains the best non-dominated solutions that have been found at a particular iteration. \\ \hline 
$f_{t_{s}^{*} ,B_{F}^{*} ,\left|{\rm {\mathcal P}}^{*} \right|} \left(\Theta _{i} \right)$ & Cost function of classical-layer optimization, where $\Theta _{i} \in {\mathbb{R} }^{p} $ is a $p$-dimensional real vector of an $i$-th system state of the quantum network.  \\ \hline 
$\Theta _{i} \in {\mathbb{R} }^{p} $ & A $p$-dimensional real vector of an $i$-th system state of the quantum network.   \\ \hline 
$\Theta _{i} \left(j,k,l\right)$ & An $i$-th system state, where $j$ is the index of a desired optimal system state, $k$ is the index of an optimal system state reproduction step, $l$ is the index of a non-optimal system state event. \\ \hline 
$T$ & Total network state, $T\left(j,k,l\right)=\left\{\Theta _{i} \left. \left(j,k,l\right)\right|i=1,\ldots ,S\right\}$,                            at a set of $S$ sub-states $\left\{\Theta _{1} ,\ldots ,\Theta _{S} \right\}$. \\ \hline 
$c\left(i\right)$ & Number of random system states. \\ \hline 
$u\left(j\right)$ & A unit cost of system change. \\ \hline 
$C_{N} $ & Total cost of classical communication. \\ \hline 
${\rm A} $ & Distribution-entity of a current system state. \\ \hline 
$R_{{\rm A} } $ & Information transmission rate of ${\rm A} $. \\ \hline 
$\nu $ & Distribution-entity of a system state. \\ \hline 
$R_{\nu } $ & Information transmission rate of $\nu $. \\ \hline 
$\Theta ^{\left(m\right)} $ & The $m$-th element of a current network state vector $\Theta $. \\ \hline 
$\Theta _{i}^{\left(m\right)} $ & The $m$-th element of $\Theta _{i} $. \\ \hline 
$C_{e} $ & An environment-dependent cost function. \\ \hline 
$M$ & Tuning parameter. \\ \hline 
$F_{cost}^{i} $ & Cost function at a given $\Theta _{i} \left(j,k,l\right)$. \\ \hline 
$N_{t_{s}^{*} } $, $N_{B_{F}^{*} } $, $N_{\left|{\rm {\mathcal P}}^{*} \right|} $ & The number of nodes that require the determination of optimal $t_{s}^{*} $, $B_{F}^{*} $ and $\left|{\rm {\mathcal P}}^{*} \right|$. \\ \hline 
$S_{t_{s}^{*} } \left(t\right)$, $S_{B_{F}^{*} } \left(t\right)$, $S_{\left|{\rm {\mathcal P}}^{*} \right|} \left(t\right)$ & The number of classical steps required to find $t_{s}^{*} $, $B_{F}^{*} $ and $\left|{\rm {\mathcal P}}^{*} \right|$ at a particular network time $t$, $t=1,\ldots ,T$. \\ \hline 
$J$ & Objective function. \\ \hline 
$c_{N_{t_{s}^{*} } }^{L} \left(t\right)$, $c_{N_{B_{F}^{*} } }^{L} \left(t\right)$, $c_{N_{\left|{\rm {\mathcal P}}^{*} \right|} }^{L} \left(t\right)$ & Link cost of classical link $L$ used for the determination of $t_{s}^{*} $, $B_{F}^{*} $ and $\left|{\rm {\mathcal P}}^{*} \right|$ at a particular time $t$. \\ \hline 
$\Theta _{M} \left(j,k,l\right)$ & Merged system state for the optimization of the classical-layer. \\ \hline 
$\Phi $ & Merging factor, $\Phi \in \left[0,1\right]$. \\ \hline 
$u$ & Uniform random number. \\ \hline 
$x_{a} ,x_{b} $,$x_{c} $ & Random numbers, $x_{a} ,x_{b} ,x_{c} \in \left[0,1\right]$. \\ \hline 
$O_{J} =f\left(\psi _{in} \right)$ & Output variable, where $\psi _{in} $ is the set of input variables, and $f\left(\cdot \right)$ is a function that transfers the uncertainty from the independent input random variables $\psi _{in} $ to the output variable $O_{J} $ \\ \hline 
$q$ & Set of certain variables. \\ \hline 
$w_{i} $ & An input variable under certainty with probability function $\delta _{f_{w_{i} } } $. \\ \hline 
$pc\left(w_{i} \right)$ & Probability concentration of $w_{i} $. \\ \hline 
$\zeta _{i,g} $ & Weighting factor. \\ \hline 
$O_{J}^{\left(i,g\right)} $ & Output variable $O_{J} $ at a given $\left(i,g\right)$. \\ \hline 
$\mu _{w_{i} } $ & Mean of $w_{i} $. \\ \hline 
$w_{i,1} $,$w_{i,2} $ & Poth locations of $w_{i} $. \\ \hline 
\end{longtable}
\end{center}
\end{document}